\def\withnotes{1}
\newcommand{\dP}{p}
\newcommand{\dQ}{q}
\newcommand{\dU}{u}
\newcommand{\logk}{\log \nsmp}
\newcommand{\tdP}{\tilde {\dP}}
\newcommand{\nsmp}{n}
\newcommand{\absz}{k}
\newcommand{\msmp}{m}
\newcommand{\err}{\delta}
\newcommand{\dst}{\varepsilon}
\newcommand{\ab}{\mathcal{X}}
\newcommand{\smb}{x}
\newcommand{\smby}{y}
\newcommand{\smbz}{z}
\newcommand{\Dk}{\Delta_{\absz}}
\newcommand{\Dgk}{\Delta_{\ge\frac1\absz}}
\newcommand{\dTV}[2]{d_{TV}(#1,#2)}
\newcommand{\dPsmb}{\dP({\smb})}
\newcommand{\dPsmby}{\dP({\smby})}
\newcommand{\dPsmbz}{\dP({\smbz})}
\newcommand{\dPsmbZ}{\dP({Z})}
\newcommand{\profile}{\varphi}
\newcommand{\mlt}{\mu}
\newcommand{\Mlt}{N}
\newcommand{\Mlts}[1]{\Mlt_{#1}}
\newcommand{\Mltsmb}{\Mlts{\smb}}
\newcommand{\flnpwrss}[2]{#1^{\underline{#2}}}
\newcommand{\prop}{f}
\newcommand{\propp}[1]{\prop(#1)}
\newcommand{\hprop}{\hat{f}}
\newcommand{\fhat}{\hat{f}}
\newcommand{\hpropz}[1]{\hprop(#1)}
\newcommand{\mP}[1]{\dP_{_{#1}}}
\newcommand{\multiset}[1]{{M}(#1)}
\newcommand{\prv}{\varphi}
\newcommand{\emd}[2]{{R}\left(#1,#2\right)}
\newcommand{\maxp}[1]{p_{_{#1}}}
\title{A Unified Maximum Likelihood Approach for Optimal Distribution Property Estimation}
\author{Jayadev Acharya\thanks{Supported by an MIT-Shell Energy Initiative grant, and Cornell University startup grant.}\\Cornell University\\\tt{acharya@cornell.edu}
\and
Hirakendu Das\\Yahoo!\\\tt{hdas@yahoo-inc.com}\\
\and
Alon Orlitsky\\UC San Diego\\\tt{alon@ucsd.edu}
\and
Ananda Theertha Suresh\\ Google Research\\\tt{theertha@google.com}}
\begin{document}

\maketitle
\thispagestyle{empty}

 \begin{abstract}
The advent of data science has spurred interest in estimating properties of discrete distributions over large alphabets. Fundamental symmetric properties such as support size, support coverage, entropy, and proximity to uniformity, received most attention, with each property estimated using a different technique and often intricate analysis tools. 

Motivated by the principle of maximum likelihood, we prove that for all these properties, a single, simple, plug-in estimator---profile maximum likelihood (PML)~\cite{OrlitskySVZ04}---performs as well as the best specialized techniques.
We also show that the PML approach is \emph{competitive} with respect to any symmetric property estimation, raising the possibility that PML may optimally estimate many other symmetric properties.

\end{abstract}

\section{Introduction}
\label{sec:introduction}
\subsection{Property estimation}
Recent machine-learning and data-science applications have motivated a
new set of questions about inferring from data.  A large
class of these questions concerns estimating properties of the unknown
underlying distribution.

Let $\Delta$
denote the collection of discrete distributions.
A distribution \emph{property} is a mapping $f:\Delta\to\reals$. 
A distribution property is \emph{symmetric} if it remains
  unchanged under relabeling of the domain symbols. For example,
  \emph{support size} 
  \[
  S(p)=|\sets{x:p(x)>0}|,
  \] or \emph{entropy}
  \[
  H(p)=\sum_x p(x)\log\frac1{p(x)}
  \] are all symmetric properties
  which only depend on the set of values of $\dPsmb$'s and not what
  the symbols actually represent. To illustrate, the two distributions
    $p = ( p(a) = 2/3, p(b) = 1/3)$ and $p' = (p'(a) = 1/3, p'(b) = 2/3)$ have the same entropy.
      In the common setting for these
questions, an unknown underlying distribution $p\in\Delta$ generates
$n$ independent samples $X^n\ed X_1,\upto X_n$, and from this
\emph{sample} we would like to estimate a given property $f(p)$.

An age-old universal approach for estimating distribution
properties is \emph{plug-in estimation}.
It uses the samples $X^n$ to find an approximation
$\hat p$ of $p$, and declares $f(\hat p)$ as the
approximation $f(p)$.

Perhaps the simplest approximation for $p$ is the
\emph{sequence maximum likelihood (SML)}.
It assigns to any sample $x^n$ the distribution $p$ that maximizes
$p(x^n)$.
It can be easily shown that SML is exactly the \emph{empirical frequency}
 estimator
that assigns to each symbol the fraction of times it appears
 in the
sample, $\maxp{X^n}(x)= \frac{\Mltsmb}{n}$, where
$\Mltsmb \ed \Mltsmb(X^n)$, the \emph{multiplicity} of symbol $\smb$,
is the number of times it appears in the sequence $X^n$. We will just write $\Mltsmb$, when $X^n$ is clear from the context . 
For example, if $n=11$, and $X^n={a~b~r~a~c~a~d~a~b~r~a}$, $\Mlts{a} =
5, \Mlts{b} =2, \Mlts{c} = 1,\Mlts{d} = 1$, and $ \Mlts{r} = 2$, and
$\maxp{X^{11}}(a)=5/11$, $\maxp{X^{11}}(b)=2/11$,
$\maxp{X^{11}}(c)=1/11$, $\maxp{X^{11}}(d)=1/11$, and
$\maxp{X^{11}}(r)=2/11$.

While the SML plug-in estimator performs well in the limit of many
samples and its convergence rate falls short of the best-known
property estimates.
For example, 
suppose we sample the uniform distribution over $k$ elements $n=k/2$ times. Since at most $n$ distinct symbols will appear, the empirical distribution will have entropy at most $\log n \le \log k -1$ bits. However from Table~\ref{tab:results},  for large $k$, only $n=O(k/\log k)$ samples are required to obtain a 1-bit accurate estimate.

Modern applications where the sample size $n$ could be sub-linear in the domain size $k$, have motivated many results characterizing the sample complexity of estimating various distribution properties (See $e.g.$,~\cite{paninski2003, BatuDKR02, BarKS01,
  Valiant11b, valiant2011power, WuY14a, WuY15, acharya2014estimating, caferov2015optimal, JiaoVHW15, OrlitskySW16, ZouVV16, BuZLV16}). Complementary to property estimation is the distribution property testing, which aims to design (sub-linear) algorithms  to test whether distributions have some specific property (See $e.g.$,~\cite{Batu01, GoldreichR00, BatuFRSW00, Paninski08, ChanDVV13, canonne2016testing, adk15, diakonikolas2016new}, and~\cite{Canonne15} for a survey). A particular line of work is competitive distribution estimation and testing~\cite{AcharyaDJOP11, AcharyaDJOPS12, AcharyaJOS13, AcharyaJOS13b,ValiantV13, OrlitskyS15}, where the objective is to design algorithms independent of the domain size, with complexity close to the \emph{best possible} algorithm. Some of our techniques are motivated by those in competitive testing.
%
\subsection{Prior results}
Since SML is suboptimal, several recent papers have used diverse and
sophisticated techniques to estimate important symmetric distribution
properties.
\begin{description}
\item[Support size]

$S(p) =|\sets{x:p(x)>0}|$, plays an important role in population and
  vocabulary estimation.  However estimating $S(p)$ is hard
    with \emph{any finite} number of samples due to symbols with
    \emph{negligible} positive probability that will not appear in our
    sample, but still contribute to $S(p)$. To circumvent
  this,~\cite{RaskhodnikovaRSS09} considered distributions in $\Delta$
  with non-zero probabilities at least $\frac1\absz$,
  \[
  \Dgk
  \ed\left\{\dP\in\Delta:
  \dPsmb\in\{0\}\cup\left[\frac1k,1\right]\right\}.\]
For $\Dgk$, SML requires $\absz\log\left(\frac1\dst\right)$ to estimate the support size to an additive accuracy of  $\dst\absz$. Over a series of
work~\cite{RaskhodnikovaRSS09, Valiant11b, WuY15}, it was shown that
the optimal sample complexity of support estimation is
$\Theta\left(\frac{k}{ \log k} \cdot \log^2\frac{1}{\dst}\right)$.

\item[Support coverage] $S_m(p) = \sum_x (1-(1-p(x))^m)$, the
  expected number of elements observed when the distribution is
  sampled $m$ times, arises in many ecological and biological
  studies~\cite{Colwell12}.  The goal is to estimate $S_m(p)$ to an
  additive $\pm \dst m$ upon observing as few samples as possible.
  Good and Toulmin~\cite{good1956number} proposed an estimator that
  for any constant $\dst$, requires $m/2$ samples to estimate
  $S_m(p)$.
Recently,~\cite{OrlitskySW16, ZouVV16} showed that it is possible to estimate $S_m(p)$ after observing only $\cO(\frac{m}{\log m})$ samples. In particular,~\cite{OrlitskySW16} showed that it is possible to estimate $S_m(p)$ after observing only $\cO(\frac{m}{\log m}\cdot {\log\frac1\dst})$ samples. Moreover, this dependence on $m$ and $\dst$ is optimal.
                                                                                                                                                                                                           

\item[Entropy]
$ H(p) =\sum_x p(x)\log\frac1{p(x)}$, the Shannon entropy of $p$ is a central object in information theory~\cite{CoverT06}, and also arises in many fields such as machine learning~\cite{Nowozin12}, neuroscience~\cite{BerryWM97, NemenmanBRS04}, and others. Entropy estimation has been studied for over half a century, and a number of different estimators have been proposed over
  time.  Estimating $H(p)$ is hard with any finite number of samples due to the possibility of infinite support. To circumvent this,
    similar to previous works we consider distributions in $\Delta$ with support size at most  $k$,
   \[\Dk \ed \{ p \in \Delta : S(p) \leq k\}.\]
 The goal is to estimate the entropy of a distribution in $\Dk$ to an additive $\pm \dst$, where $\Dk$ is all discrete distributions over at most $k$ symbols.
  In a recent set of papers~\cite{Valiant11b, WuY14a,
JiaoVHW15}, the min-max sample complexity of estimating entropy to $\pm\dst$
was shown to be $\Theta\left(\frac{\absz}{\log \absz}\cdot \frac1{\dst}\right)$.

\item[Distance to uniform] $\lVert p-u\rVert_1 = \sum_x |p(x)-1/k|$, where $u$
  is a uniform distribution over a known set $\cX$, with $|\cX|=k$. Let $\Delta_\cX$ be the set of distributions
  over the set $\cX$.
For an unknown $\dP\in\Delta_\cX$, to estimate $||{\dP}-{\dU}||_1$ to an additive
$\pm \dst$,~\cite{valiant2011power} showed that $\cO\left(\frac{\absz}{\log \absz}\cdot \frac1{\dst^2}\right)$ samples are sufficient. The dependence was later shown to be tight in~\cite{JiaoHW16}.
\end{description}
\cite{Valiant11b} also proposed a plug-in approach for estimating symmetric properties. We discuss and compare the approaches in Section~\ref{sec:pml-intro}.

\subsection{New results}
Each of the above properties was studied in one or more papers and
approximated by different sophisticated estimators, often drawing from
involved techniques from  fields such as approximation
theory.  By contrast, we show that a single simple plug-in estimator
achieves the state of the art performance for all these problems.

As seen in the introduction for entropy, SML is suboptimal in the large alphabet regime, since it over-fits the estimate on only the \emph{observed symbols} (See~\cite{jiao2014maximum} for detailed performance of SML estimators of entropy, and other properties).
However, symmetric properties of distributions do not depend on the labels of the symbols. For all these properties, it makes sense to look at a sufficient statistic, the data's \emph{profile} (Definition~\ref{def:profile}) that represents the number of elements appearing any given number of times. Again following the \emph{principle of maximum likelihood},~\cite{OrlitskySVZ04, OSVZ11} suggested discarding the symbol labels, and finding a distribution that maximizes the probability of the observed profile, which we call as \emph{profile maximum likelihood (PML)}.

We show that replacing the SML plug-in estimator by PML yields a
unified estimator that is provably at least as good as the best
specialized techniques developed for all of the above properties.
\begin{theorem}[Informal]
There is a unified approach based on PML distribution that achieves
the optimal sample complexity for all the four problems mentioned
above (entropy, support, support coverage, and distance to uniformity).
\end{theorem}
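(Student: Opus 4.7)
The plan is to establish a general competitive lemma showing that the PML plug-in inherits, up to a multiplicative blow-up equal to the number of profiles, the guarantee of any symmetric estimator. Concretely, I would prove: if $\hprop$ is any symmetric estimator satisfying $\Prb_\dP[|\hprop(\profile) - \propp{\dP}| > \dst] \le \err$ for every source $\dP$, then the PML plug-in $\propp{\dPpml}$ satisfies $\Prb_\dP[|\propp{\dPpml} - \propp{\dP}| > 2\dst] \le \numofprofiles \cdot \err$, where $\numofprofiles$ is the total number of profiles of length $\nsmp$. The argument uses the triangle inequality to split the bad event for PML into (i) $|\hprop(\profile) - \propp{\dP}| > \dst$, directly bounded by $\err$, and (ii) $|\hprop(\profile) - \propp{\dPpml}| > \dst$. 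Event (ii) is bounded by replacing $\dP(\profile)$ with the PML probability of $\profile$ (valid because PML by definition maximizes the profile probability over all distributions), grouping profiles by their PML distribution $\dQ$, and using the per-distribution bound $\sum_\profile \dQ(\profile) \II[|\hprop(\profile) - \propp{\dQ}| > \dst] \le \err$, summed over the at most $\numofprofiles$ distinct PML distributions.

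Second, I would combine this lemma with the Hardy--Ramanujan estimate $\numofprofiles \le e^{\Order(\sqrt{\nsmp})}$, reducing the theorem to a per-property task: exhibit an estimator achieving the known min-max sample complexity $\nsf(\dst)$ with failure probability $\err \le e^{-C\sqrt{\nsf}}$ for a sufficiently large constant $C$. Given such a $\hprop$, the competitive lemma yields PML with constant failure probability and $2\dst$-accuracy at the same sample size, and a constant-factor rescaling of $\dst$ converts this to the advertised accuracy without affecting the claimed sample complexity. Hence the entire theorem reduces to producing exponentially-concentrated estimators for entropy, support size, support coverage, and distance to uniformity.

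Finally, for the four target properties I would take the polynomial-approximation or Poissonized linear estimators of~\cite{WuY15, WuY14a, JiaoVHW15, OrlitskySW16, JiaoHW16}, each of which can be written after Poissonization in the form $\sum_\smb g(\Mltsmb)$ with $g$ an explicit bounded function. Independence of the Poissonized multiplicities lets me apply Bernstein's inequality, giving deviation tails of the form $\exp(-\Omega(\nsmp \dst^2 / B^2))$, where $B$ is the largest per-symbol contribution dictated by the polynomial coefficients. The main obstacle will be controlling $B$: for support coverage and distance to uniformity the raw polynomial coefficients can be of order $\Order(1)$ on heavy-probability symbols, which destroys the target $e^{-\Omega(\sqrt{\nsmp})}$ tail. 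I would handle this by splitting symbols into ``light'' (polynomial-estimated) and ``heavy'' (empirically-estimated) regimes, choosing the split so that truncation does not inflate the bias beyond $\dst$ and so that the heavy contribution is concentrated by standard binomial tails. The most delicate part is matching the truncation threshold to each property's optimal parameter regime so that none of these steps loses more than a constant factor in $\nsmp$ relative to the known optimum.
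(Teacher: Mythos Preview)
Your proposal is correct and matches the paper's two-step architecture: a competitive lemma (PML inherits any symmetric estimator's guarantee up to a factor $|\Phi^n|\le e^{3\sqrt n}$), followed by exhibiting property-specific estimators with $e^{-\Omega(\sqrt n)}$ failure probability at the optimal sample size.

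Two technical choices differ from the paper. First, your proof of the competitive lemma groups profiles by their PML distribution and sums the per-distribution error $\err$ over at most $|\Phi^n|$ groups; the paper instead thresholds on $\dP(\profile)\gtrless\err$, observing that any $\profile$ with $\dP(\profile)\ge\err$ forces both $\dP$ and $\dPpml$ to be $\dst$-accurate for $\hprop$, so all error mass sits on low-probability profiles. Both arguments yield the same $\err\cdot|\Phi^n|$ bound; the paper's is shorter, yours makes the ``competitiveness'' structure more explicit. Second, for concentration you propose Poissonization plus Bernstein, whereas the paper deliberately avoids Poissonization and uses McDiarmid on fixed-$n$ samples, modifying the polynomial estimators so the bounded-difference constant is at most $n^\alpha/n$. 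Your route works but you will need to de-Poissonize (or condition on the Poisson sample size) to feed the result back into the fixed-$n$ competitive lemma; the paper's McDiarmid route sidesteps that bookkeeping. In both approaches the real work is exactly what you identify: truncating the polynomial estimators so the per-symbol contribution stays polylogarithmic without spoiling the bias---the paper does this by zeroing out $g_x$ when $\Mltsmbp<c_2\log n$ but $\Mltsmb\ge c_1\log n$, which is the same light/heavy split you describe.
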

We prove in Corollary~\ref{cor:competitive} that the PML approach is \emph{competitive} with respect to \emph{any symmetric property}.

For symmetric properties, these results are perhaps a justification of Fisher's thoughts on Maximum Likelihood:
\setlength\epigraphwidth{6.1in}
\epigraph{``\textit{Of course nobody has been able to prove that
    maximum likelihood estimates are best under all
    circumstances. Maximum likelihood estimates computed with all the
    information available may turn out to be inconsistent. Throwing
    away a substantial part of the information may render them
    consistent.}''}{R. A. Fisher's thoughts on Maximum Likelihood.}

However, several heuristics for estimating PML has been studied
including approaches motivated by algebraic
approaches~\cite{AcharyaDMOP10}, EM-MCMC algorithms~\cite{orlitsky2004algorithms},~\cite[Chapter 6]{PanT12}, Bethe approximation~\cite{Vontobel12, Vontobel14}. 
As discussed in Section~\ref{sec:pml-intro}, PML estimation reduces to maximizing a monomial-symmetric polynomial over the simplex. We also provide another justification of the PML approach by proving that even approximating a PML can result in sample-optimal estimators for the problems we consider.  We hope that these strong sample complexity guarantees will motivate algorithm designers to design efficient algorithms for approximating PML.
Table~\ref{tab:results} summarizes the results in terms of the
sample complexity.



\begin{table}[htb]
\medskip
\begin{center}
\begin{tabular}{|c|c|c|c|c|c|c|}
\hline
Property & Notation & $\cP$ & SML & Optimal & References
& PML\\ \hline
Entropy & $H(p)$ & $\Delta_k$ & $\frac{\absz}{\dst}$ &
$\frac{\absz}{\log \absz}\frac1{\dst}$ &
\!\!\cite{Valiant11b, WuY14a, JiaoVHW15}\!\! & optimal\footnotemark  \\ \hline
Support size & $\frac{S(p)}{k}$ & \!$\Dgk$\! & $\absz\log\frac1\dst$ &
$\frac{k}{\log k} \log^2 \frac1\dst$ &
\cite{WuY15} & optimal \\ \hline
Support coverage & $\frac{S_m(p)}{m}$ & $\Delta$ & $m$ & $\frac{m}{\log m}\log\frac1\dst$ &\cite{OrlitskySW16} & optimal \\ \hline
Distance to $u$ & $\lVert p-u\rVert_1$ & $\Delta_\cX$ & $\frac{k}{\dst^2}$ &
$\frac{k}{\log k} \frac{1}{\dst^2}$ &
\cite{valiant2011power, JiaoHW16} & optimal \\ \hline
\end{tabular}
\end{center}
\label{tab:results}
\caption{Estimation complexity for various properties, up to a
constant factor. For all properties shown, PML achieves
the best known results. 
Citations are for specialized techniques, PML results are shown in
this paper. Support and support coverage results have 
been normalized for consistency with existing literature.}
\end{table}
\footnotetext{{We call an algorithm optimal if it is optimal up to universal constant factors.}}

\noindent To prove these PML guarantees, we establish two results that are
of interest on their own right.
\begin{itemize}
\item
With $n$ samples, PML estimates any symmetric property of $p$
with essentially the same accuracy, and at most
$e^{3\sqrt n}$ times the error, of any other estimator. 
\item
For a large class of symmetric properties, including all those mentioned above, if there is an estimator that uses $n$ samples, and has an error probability $1/3$, we design an estimator using $O(n)$ samples, whose error probability is nearly exponential in $n$.  We remark that this decay is much faster than applying the median trick.
\end{itemize}
Combined, these results prove that PML plug-in estimators are sample-optimal.

We also introduce the notion of \emph{$\beta$-approximate ML}
distributions, described in Definition~\ref{def:appr}. These
distributions are more relaxed version of PML, hence may be
more easily computed, yet they provide
essentially the same performance guarantees.


\ignore{
Furthermore, note that 
the sorted total variation distance between two distributions $\dP$ and $\dQ$ 
is the least total variation distance between all possible pairs of distributions obtained by relabeling of the symbols of $\dP$, 
with the distribution $\dQ$. 
It can be shown that the sorted total variation distance is at most
$\emd{\dP}{\dQ}$, and thus estimating in EMD is at least as strong as
estimating in sorted total variation distance. 
}
The rest of the paper is organized as follows. In Section~\ref{sec:results}, we formally state our results. In Section~\ref{sec:pml-intro}, we define profiles, and PML. 
In Section~\ref{sec:outline}, we outline the our approach. In
Section~\ref{sec:pml_maximum}, we demonstrate auxiliary results for
maximum likelihood estimators. In Section~\ref{sec:approach}, we
outline how we apply maximum likelihood to support, entropy, and
uniformity, and support coverage.

\ 

\ignore{
For a number of these problems, it is easy to see that the standard approach of plugging in the empirical distribution (we denote it by SML standing for Standard Maximum Likelihood), has sub-optimal sample complexity. For example, suppose we want to estimate the entropy of the uniform distribution $\dU$ over $\absz$ elements. Then, it is easy to see that until we sample $\Omega(k)$ times, a constant fraction of symbols will not appear in the samples, and we will be off by at least a constant from $\log \absz$, the true entropy. Indeed, there have been a number of papers showing the inefficiency of SML for these problems~\cite{Jiao, ??, ??}.  
}
\ignore{
\subsection{Prior Work}
ML estimators are appealing because they are a unified and simple
technique for estimating a property $f(p)$. The ML estimator outputs
$f(\maxp{X^n})$. However, the convergence rate of the above estimator
is slow.  In particular, it is well known that for estimating several
fundamental properties such as entropy, support size, unseen
estimation, the number of samples should be $\Omega(k)$. To overcome
this, estimators for each of these properties have been studied
extensively in the past decade: entropy~\cite{paninski2003,
Valiant11b, WuY14a, JiaoVHW15}, support size~\cite{RaskhodnikovaRSS09,
Valiant11b, WuY15}, unseen species estimation~\cite{ZouVV16, OrlitskySW16}, and
distance to uniformity~\cite{Valiant11a}. The sample complexity of
techniques (including ML) can be seen Table~\ref{tab:results}. We
expand on the contributions below.
\noindent\textbf{Entropy} estimation has been studied for more than half a century, and
a number of different estimators have been proposed over time
(see~\cite{paninski2003} and references therein).~\cite{paninski2003}
showed the existence of an estimator with sub-linear sample complexity
in $\absz$. In a recent set of papers~\cite{Valiant11b, WuY14a,
JiaoVHW15}, the exact min-max sample complexity of estimating entropy
was shown to be $\Theta\left(\frac{\absz}{\log \absz}\frac1{\dst}\right)$.
\noindent\textbf{Support} estimation via SML has a sample complexity of
$\absz\log\left(\frac1\dst\right)$. Over a series of
work~\cite{RaskhodnikovaRSS09, Valiant11b, WuY15}, it was shown that
the optimal sample complexity of support estimation is
$\Theta\left(\frac{k}{ \log k} \cdot \log^2 \left(\frac{1}{\dst}\right) \right)$.
\noindent\textbf{Unseen} species estimation has a  long history. The first
estimator for the number of new symbols was by Good and
Toulmin~\cite{GoodT56}. The GT estimator is able to solve the problem
as long as $m\le O(n)$, for a constant $\dst$. In other words, after
observing $n$ samples, we can predict how many new symbols we will
observe when we pick up to $n$ new samples. Efron and
Thisted~\cite{EfronT86} proposed an estimator, which is widely used in
practice, although until recently no provable guarantees were
established for its performance or that of any related estimator when
$\msmp > n$. Recently,~\cite{ZouVV16, OrlitskySW16} showed that it is
possible to estimate the number of new symbols up to $m\le n\log
n$. Moreover, the dependence on $n$ is optimal. This result shows that
we can predict the number of new symbols in a much larger sample than
the number of samples we observe. More precisely,~\cite{OrlitskySW16}
showed that it is possible to predict the number of unseen symbols up
to $m = \cO\Paren{\nsmp\log \nsmp /\log (1/\dst)}$.
\noindent\textbf{Distance to uniformity} is related to the problem of testing whether a distribution $\dP$ is uniform or if $\dTV{\dP}{u}\ge 0$. The sample complexity of testing uniformity has been shown to be $\Theta\left(\frac{\sqrt\absz}{\dst^2}\right)$~\cite{BatuFFKRW01,Paninski08}. For estimating $\dTV{\dP}{\dU}$ to an additive $\pm \dst$,~\cite{Valiant11a} showed an estimator with sample complexity $\cO\left(\frac{\absz}{\log \absz}\frac1{\dst^2}\right)$, which has optimal dependence on $\absz$ for a constant $\dst$.
}\ignore{
\subsection{Our results}
Despite all the previous results, it remains unknown if there is a
plug-in estimation routine that can be used to estimate any property
with near-linear number of samples.
In this paper we study a different estimator, \emph{pattern maximum
likelihood (PML)}~\cite{OrlitskySVZ04}, and show that using the PML
plug-in estimator instead of standard ML yields property estimates
that are provably as good or better than the best techniques developed
for the all the problems stated in the above problem.  Furthermore, we
show that the empirical estimator is near-optimal for (up to constant
factors) for estimating support size, unseen species estimation,
entropy, earth mover's distance, distance between pairs of
distributions.
Orlitsky, Santhanam, Viswanathan,~\cite{OrlitskySVZ04, OrlitskySVZ05}
and Zhang proposed a very generic maximum likelihood approach, called
as PML (standing for the pattern/profile maximum likelihood). Instead
of find the SML distribution, which maximizes the probability of the
entire sequence of samples, they proposed to maximize the probability
of the profile (a function) of the samples (see Definition ???). In
addition to several theoretical results, they also show empirically
such estimators have a good performance when the number of samples is
small. They showed that PML can be used to compress profiles with
sub-linear redundancy. However, no guarantees on theoretical
performance of PML for symmetric property estimation is known so
far. In particular, no sample-complexity guarantees are known.
We propose a unified approach to estimating distribution properties
based on PML distribution. We show that instead of designing new
estimators, there is an approach which yields optimal performance for
many symmetric properties.  Our main result, is the following.
\begin{theorem}[Informal]
There is a unified approach based on PML distribution that achieves
the optimal sample complexity for all the four problems mentioned
above (entropy, support, unseen, and distance to uniformity).
\end{theorem}
\paragraph{Computation and approximation.}
On the flip side, we are not aware of efficient algorithms that can
compute the PML distribution exactly. PML for small string lengths
were computed by~\cite{PanAO09}. Recently drawing connections to
permanents a Bethe approximation algorithm was proposed
by~\cite{Vontobel12, Vontobel14}.
Our main result is in fact stronger that plugging in the PML
distribution. The same sample complexity guarantees hold for
approximate PML distributions too. We hope that these strong sample
complexity guarantees will motivate algorithm designers to study the
problem of approximating the PML.
}

\section{Formal definitions and results}
\label{sec:results}
{Recall that $\Dk$ is the set of all discrete distributions with support at most $k$, and
  $\Delta=\Delta_{\infty}$ is the set of all discrete distributions. 
A property estimator is a mapping $\hat{f} :\cX^n \to \RR$ 
that converts observed samples over $\ab$ to an estimated property value.
The \emph{sample complexity} of $\hat f$ when estimating
a property $f:\Delta\to\RR$ for distributions in 
a collection $\cP\subseteq\Delta$, 
is the number of samples $\hat f$ needs to 
determine $f$ with high accuracy and probability 
for all distributions in $\cP$.
Specifically, for approximation accuracy $\dst$ and 
confidence probability $\delta$, 
\[
C^{\hat{f}}(f, \cP, \delta, \dst)
\ed
\min\left\{n :
p(|f(p)-\hat{f}(X^n)| \geq \dst) \leq \delta\ \forall p \in \cP \right\}.
\]
The sample complexity of estimating $f$ is the lowest sample
complexity of any estimator,
\[
C^*(f, \cP, \delta, \dst) = \min_{\hat{f}}
C^{\hat{f}}(f, \cP, \delta, \dst).
\]

\ignore{
\begin{center}
\fbox{
\begin{minipage}{5.3in}
\smallskip
{\bf Input:} $\beta>0$, sequence $X^n$, $\cP$, symmetric function $\propp{\cdot}$
\begin{enumerate}
\medskip
\item
Compute a $\beta$-approximate PML of $\profile(X^n)$
\medskip
\item
Output $\propp{\mP \profile}$. 
\smallskip
\end{enumerate} 
\end{minipage}
}
\end{center}
}
In the past, different sophisticated estimators were
used for every property in Table~\ref{tab:results}. 
We show that the simple plug-in estimator that uses
any PML approximation $\tdP$, has optimal performance
guarantees for all these properties. 

It can be shown that the sample complexity has only moderate
dependence on $\delta$, that is typically de-emphasized. 
For simplicity, we therefore abbreviate $C^{\hat{f}}(f, \cP,
1/3, \dst )$ by $C^{\hat{f}}(f, \cP,\dst)$.

\ignore{
For \emph{any finite} number of samples, there are distributions that has many elements with \emph{negligible} positive probability that will not appear in our sample. To circumvent this,~\cite{RaskhodnikovaRSS09} considered distributions in $\Dk$ with non-zero probabilities at least $\frac1\absz$, 
$
\Dgk \ed\left\{\dP\in\Dk:
\dPsmb\in\{0\}\cup\left[\frac1k,1\right]\right\}.
$
}
In the next theorem, assume $n$ is at least the optimal sample complexity of estimating entropy, support, support coverage, and distance to uniformity (given in Table~\ref{tab:results}) respectively. 
\begin{theorem}
\label{thm:entropy} 
For all $\dst>c/n^{0.2}$, 
any plug-in $\exp{(-\sqrt n)}$-approximate PML $\tilde{p}$ satisfies,
\begin{description}
\item[Entropy]
\[
C^{\tdP} (H(p), \Delta_k, \dst) \asymp C^* (H(p), \Delta_k,
\dst),\footnote{For  $a,b > 0$, denote $a \lesssim
  b$ or $b \gtrsim a$ if for some universal constant $c$,
  $a/b \leq c $. Denote $a \asymp b$ if
  both $a \lesssim b$ and $a \gtrsim b$.}
\]
\item[Support size]
  \[
C^{\tdP} (S(p)/k, \Dgk, \dst) \asymp C^* (S(p)/k, \Dgk, \dst),
  \]
\item[Support coverage]
    \[
C^{\tdP} (S_m(p)/m, \Delta, \dst) \asymp C^* (S_m(p)/m, \Delta, \dst),
\]
\item[Distance to uniformity]
  \[
  C^{\tdP} (\lVert p-u\rVert_1, \Delta_\cX, \dst) \asymp C^* (\lVert p-u\rVert_1, \Delta_k,
  \dst).
  \]
\end{description}
\end{theorem}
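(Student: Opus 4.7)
The plan is to combine the two auxiliary results announced in the introduction. First, the \emph{competitiveness} property: any $\exp(-\sqrt n)$-approximate PML plug-in estimator of a symmetric property achieves error probability at most $e^{O(\sqrt n)}$ times that of any other estimator (at essentially the same accuracy). Second, an \emph{amplification} lemma: for each of the four properties in Table~\ref{tab:results}, an estimator that works with constant error probability using $n$ samples can be upgraded, at the cost of only a constant factor more samples, into an estimator whose error probability is $\exp(-\Omega(\sqrt n))$. Multiplying these two bounds yields that PML matches the best specialized techniques up to constants. The matching lower bound is immediate since the PML plug-in is one particular estimator.

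Concretely, let $\prop$ be one of the four properties, let $\cP$ be the corresponding class, and set $n^* = C^*(\prop,\cP,\dst)$. By the specialized estimators cited in Table~\ref{tab:results}, there is an estimator with error probability at most $1/3$ at accuracy $\dst$ using $n^*$ samples. I would invoke the amplification lemma with $n = C n^*$ for a sufficiently large constant $C$, producing an estimator $\hat\prop$ whose error probability at accuracy $\dst/2$ is at most $\exp(-c\sqrt n)$ for any desired constant $c$. The PML competitiveness bound then gives, for any $\exp(-\sqrt n)$-approximate PML distribution $\tdP$,
\[
\Pr\big[|\prop(\tdP)-\prop(p)|>\dst\big]\;\le\; e^{O(\sqrt n)}\cdot e^{-c\sqrt n}\;\le\;\tfrac13
\]
once $c$ is taken large enough. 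Hence $C^{\tdP}(\prop,\cP,\dst)\le C\,n^* = O(C^*(\prop,\cP,\dst))$, as required.

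The main obstacle is the amplification lemma. Standard boosting via the median trick would cost a multiplicative $\log(1/\err)$ factor in samples to drive the error to a prescribed $\err$, which is far too weak to cancel the $e^{3\sqrt n}$ blow-up from PML competitiveness. Achieving sub-exponential decay with only a constant sample blow-up will require a property-specific concentration argument: for each of the four properties, the known optimal estimator should be representable as a Lipschitz or bounded-difference functional of the (Poissonized) empirical profile, so that McDiarmid-type concentration applied to the specialized estimator delivers the required tail. The condition $\dst>c/n^{0.2}$ in the theorem statement is most likely a bookkeeping artifact from balancing the mild accuracy degradation incurred at the amplification and competitiveness steps together with the requirement that an $\exp(-\sqrt n)$-approximate PML remains meaningfully close to the exact PML for the sample sizes in Table~\ref{tab:results}.
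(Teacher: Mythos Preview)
Your proposal is correct and follows essentially the same route as the paper: combine the PML competitiveness bound (error inflation by at most $|\Phi^n|/\beta \le e^{4\sqrt n}$ for a $\beta=e^{-\sqrt n}$-approximate PML) with a McDiarmid-based concentration argument showing that, for each of the four properties, a suitably modified optimal estimator has bounded differences of order $n^{\alpha-1}$ and hence error probability $\exp(-\Omega(\dst^2 n^{1-2\alpha}))$, which dominates $e^{4\sqrt n}$ precisely when $\dst \gtrsim n^{-0.2}$ (taking $\alpha=0.1$). Two small refinements relative to your sketch: the paper works with fixed (not Poissonized) samples so that McDiarmid applies directly, and the existing optimal estimators do \emph{not} have small enough bounded differences off the shelf---the appendix modifies each one (e.g., zeroing out the polynomial estimator in an intermediate regime) to enforce the $n^{\alpha}/n$ bound while keeping the bias at $\dst$.
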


\ignore{
In the typical setting for these questions, an unknown underlying
distribution $p\in\Delta_k$ generates $n$ independent samples $X^n\ed
X_1,\upto X_n$, and from this \emph{sample} we would like to estimate
a given property $f(p)$.
Each of these problems was addressed in several papers, and several estimators were derived for each. 
An appealing unified technique for estimating every distribution property
$f(p)$ is the following two step approach. {\bf Step 1: } Find an
approximation $q$ of $p$. {\bf Step 2:} Use the \emph{plug-in} 
estimate $f(q)$ to approximate $f(p)$. Perhaps the simplest estimate
for $p$ is the empirical \emph{maximum likelihood (ML)}. But as
shown in the next Section, while the ML plug-in estimator performs well in the
limit of many samples, it falls short of the best property estimates.
In this paper we study a different estimator, \emph{profile
maximum likelihood (PML)}~\cite{OrlitskySVZ04}, and show that using
the PML plug-in estimator instead of standard ML 
yields property estimates that are provably as good or better
than 
the best techniques developed for the three problems above.
%
%
In addition to deriving a single, uniformly-optimal estimator for
all three problems, another important result we derive, that also plays an essential
part in the proofs, is showing that all three
estimates can be modified to concentrate exponentially fast. 
An obvious question arising is whether exponential concentration and PML optimality hold for additional symmetric properties as well.
}

\section{PML: Profile maximum likelihood}
\label{sec:pml-intro}
\ignore{\subsection{SML: Sequence Maximum likelihood}
Given $n$ independent samples $X^n \ed X_1, X_2, \ldots X_n$ from an
unknown distribution $\dP \in \cP$, the \emph{sequence maximum likelihood
(SML)} estimate assigns the sample the distribution maximizing its
probability,
\[
\maxp{X^n}
\ed
\arg \max_{p \in \cP} p(X^n).
\]
The multiplicity $\Mltsmb$ of a symbol $\smb$ is the number of times
it appears in the sequence $X^n$. 
 It can be easily shown that the
SML estimate is exactly the \emph{empirical frequency}
 estimator
that assigns to each symbol the fraction of times it appears
 in the
sample, $\maxp{X^n}(x)= \frac{\Mltsmb}{n}.$
For example, if $n=11$, and $X^n={a~b~r~a~c~a~d~a~b~r~a}$, $\Mlts{a} =
5, \Mlts{b} =2, \Mlts{c} = 1,\Mlts{d} = 1$, and $ \Mlts{r} = 2$, and
$\maxp{X^{11}}(a)=5/11$, $\maxp{X^{11}}(b)=2/11$,
$\maxp{X^{11}}(c)=1/11$, $\maxp{X^{11}}(d)=1/11$, and
$\maxp{X^{11}}(r)=2/11$.
The SML estimate has many desirable properties such as consistency,
and ease of computation. For example, to learn an unknown $\dP$ over
$k$ elements to an $\ell_1$ distance $\dst$ SML requires $n
= \cO(\absz/\dst^2)$ samples, which is optimal up to constant
factors~\cite{DevroyeL01}.
However, for estimating properties of distributions, especially in the
regime when the number of samples \emph{sublinear} in the domain size,
SML approach is sub-optimal. 
\begin{example}
Consider the uniform distribution $\dU$ over $k$ elements. Suppose we
sample it $n=k/2$ times. Since at most $n$ distinct symbols will appear, the empirical distribution will have entropy at most $\log k-1$ bit. However from Table~\ref{tab:results}, for large $k$, only $n=O(k/\log k)$ samples are required to obtain a 1-bit accurate estimate.
\end{example}
SML is suboptimal as it overfits the distribution estimate.  Hence,
instead of maximizing the probability of the observed sequence we
maximize the observed profile, which is a sufficient statistic for
estimating symmetric properties of interest.
}
\subsection{Preliminaries}
For a sequence $X^n$, recall that the \emph{multilplicity}
$\Mltsmb$ is the number of times $\smb$ appears in $X^n$. Discarding, the labels, profile of a sequence~\cite{OrlitskySVZ04} is defined below.
\begin{definition}
\label{def:profile}
The \emph{profile} of a sequence $X^n$, denoted $\profile(X^n)$ is
the multiset of the multiplicities of all the symbols appearing in
$X^n$.  
\end{definition}
 For example,
$\profile( \emph{a~b~r~a~c~a~d~a~b~r~a}) = \{1, 1, 2, 2, 5\}$,
denoting that there are two symbols appearing once, two appearing
twice, and one symbol appearing five times, removing the association
of the individual symbols with the multiplicities. Profiles are also referred to as histogram order statistics~\cite{paninski2003}, fingerprints~\cite{Valiant11b}, and as histograms of histograms~\cite{BatuFRSW00}.

Let $\Phi^n$ be all profiles of length-$n$ sequences. Then, $\Phi^4 = \{\{1, 1, 1, 1\}, \{1,1, 2\}, \{1, 3\}, \{2, 2\}, \{4\}\}$. In particular, a profile of a length-$n$ sequence is an unordered partition of $n$. Therefore, $|\Phi^n|$,  the number of profiles of length-$n$ sequences is equal to the partition number of $n$. Then, by the Hardy-Ramanujam bounds on the partition number,
\begin{lemma}[\cite{HardyR18, OrlitskySVZ04}]\label{lem:num-profiles}
$|\Phi^n|\le \exp(3\sqrt{\nsmp})$. 
\end{lemma}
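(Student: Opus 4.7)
The plan is to reduce the claim to the classical Hardy--Ramanujan upper bound on the partition function and then prove that bound via Euler's generating function. First, I would identify $|\Phi^n|$ with the partition number $p(n)$: by Definition~\ref{def:profile}, a profile is an unordered multiset of positive integers (the multiplicities of observed symbols) summing to $n$, which is precisely an unordered partition of $n$. Thus it suffices to prove $p(n) \le \exp(3\sqrt n)$.

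The classical route is via Euler's identity $F(x) \ed \sum_{n\ge 0} p(n)\, x^n = \prod_{k\ge 1}(1-x^k)^{-1}$, valid for $x \in [0,1)$. Since all coefficients of $F$ are nonnegative, for every such $x$ we have $p(n) \le x^{-n} F(x)$. Writing $x = e^{-t}$ with $t>0$ and then optimizing $t$ as a function of $n$ yields the bound. The key step is to produce a clean upper estimate on $\log F(e^{-t})$.

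To that end, expanding $-\log(1-y) = \sum_{j \ge 1} y^j/j$ gives
\[
\log F(e^{-t}) = \sum_{k\ge 1}\sum_{j\ge 1}\frac{e^{-jkt}}{j} = \sum_{j\ge 1}\frac{1}{j}\cdot\frac{e^{-jt}}{1-e^{-jt}} \le \sum_{j\ge 1}\frac{1}{j}\cdot\frac{1}{jt} = \frac{\pi^2}{6 t},
\]
where the inequality uses $e^{jt}-1 \ge jt$ for $t>0$. Hence $p(n) \le \exp\!\bigl(nt + \pi^2/(6t)\bigr)$ for every $t>0$, and choosing $t = \pi/\sqrt{6n}$ balances the two terms and yields $p(n) \le \exp(\pi\sqrt{2n/3})$. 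Since $\pi\sqrt{2/3} < 2.57 < 3$, the stated bound $|\Phi^n| \le \exp(3\sqrt n)$ follows.

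The main obstacle, and essentially the only nontrivial step, is the bound $\log F(e^{-t}) \le \pi^2/(6t)$; the elementary estimate $1/(1-e^{-jt}) \le 1/(jt)$ avoids any saddle-point or Meinardus-type analysis. A purely combinatorial alternative would note that a partition of $n$ has at most $\sqrt{2n}$ distinct parts (since $1+2+\cdots+m \le n$ forces $m \le \sqrt{2n}$) and count the choices of multiplicities on those parts; this reproduces the $\exp(O(\sqrt n))$ rate but with a worse constant, so the generating-function route is preferable for the stated constant $3$.
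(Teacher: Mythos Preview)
Your proof is correct. The paper itself does not supply a proof of this lemma: it simply observes that a profile of a length-$n$ sequence is an unordered partition of $n$, so $|\Phi^n|$ equals the partition number, and then cites the Hardy--Ramanujan bound. Your argument fills in exactly what the paper defers to the references, via the standard Euler generating-function route and the elementary inequality $e^{jt}-1\ge jt$, arriving at $p(n)\le\exp(\pi\sqrt{2n/3})<\exp(3\sqrt n)$. There is nothing to compare beyond noting that you have actually supplied the classical proof the paper only points to.
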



For a distribution $\dP$, the probability of a profile $\profile$ is defined as 
\[
\dP(\profile) \ed \sum_{X^n:\profile(X^n) = \profile} \dP(X^n), 
\] 
the probability of observing a sequence with profile $\profile$. Under \emph{i.i.d.} sampling,
\[
\dP(\profile) = \sum_{X^n:\profile(X^n) = \profile} \prod_{i=1}^\nsmp \dP(X_i).
\]
For example, the probability of observing a sequence with profile $\profile = \{1,2\}$ is the probability of observing a sequence with one symbol appearing once, and one symbol appearing twice. A sequence with a symbol $\smb$ appearing twice and $\smby$ appearing once ($e.g.$, $x\ y\ x$) has probability $\dPsmb^2\dPsmby$. Appropriately normalized, for any $\dP$, the probability of the profile $\{1,2\}$ is
\begin{align}
\dP(\{1,2\}) =  \sum_{X^n:\profile(X^n) = \{1,2\}}
\prod_{i=1}^\nsmp \dP(X_i) = {3 \choose 1}  \sum_{a\ne b\in\ab}  p(a)^2p(b),\label{eqn:profprob}
\end{align}
where the normalization factor is independent of $\dP$. The summation is a {monomial symmetric polynomial} in the probability values. See~\cite[Section 2.1.2]{PanT12} for more examples. 
\subsection{Algorithm}
Recall that $\maxp{X^n}$ is the distribution maximizing the
probability of $X^n$. Similarly, define~\cite{OrlitskySVZ04}: 
\[
\maxp{\profile} \ed \max_{\dP\in\cP} \dP(\profile)
\]
as the distribution in $\cP$ that maximizes the probability of
observing a sequence with profile $\profile$.

For example, for $\profile = \{1, 2\}$. For $\cP = \Delta_k$, from~\eqref{eqn:profprob},
\[
\dP_{\profile} = \arg\max_{\dP\in\Dk} \sum_{a\ne b}{\dP(a)^2\dP(b)}.
\]
Note that in contrast, SML only maximizes one term of this expression. 

\noindent We give two examples from the table in~\cite{OrlitskySVZ04} to
distinguish between SML and PML distributions, and also show an instance where
PML outputs distributions over a larger domain than those appearing in
the sample.
\begin{example}
Let $\cX = \{a, b, \ldots, z\}$. Suppose $X^n = x\ y\ x$, then the SML distribution is $(2/3, 1/3)$. However, the distribution in $\Delta$ that maximizes the probability of the profile $\profile(x\ y\ x) = \{1,2\}$ is $(1/2, 1/2)$. Another example, illustrating the power of PML to predict new symbols is $X^n = a\ b\ a\ c$, with profile $\profile(a\ b\ a\ c) = \{1, 1, 2\}$. The SML distribution is $(1/2, 1/4, 1/4)$, but the PML is a uniform distribution over 5 elements, namely $(1/5, 1/5, 1/5,1/5, 1/5)$.
\end{example}

\ignore{The {probability multi-set}
of a distribution $p$ over $\cX$ is the multi-set
$\multiset{p} \ed \sets{p(x):x\in\cX}$ of probabilities, ignoring the
association between elements of $\cX$ and their probability.  For
example, the probability multi-set of the distribution
$(.1,.2,.3,.1,.2,.1)$ over $\sets{1\upto 6}$ is $\sets{.3, .2, .2, .1,
.1, .1}$. Note that, as above, probability multi-set can be identified
with its sorted version.  A symmetric property of a distribution $\dP$
is uniquely determined by $\multiset{\dP}$.}

\ignore{
\subsection{Maximum likelihood}
One of the simplest, yet among the most useful estimators is the
\emph{maximum likelihood (ML)} estimate that assigns to any sample
the distribution maximizing its probability,
\[
\maxp{X^n}
\ed
\arg \max_{p \in \Delta_k} p(X^n).
\]
The multiplicity $\Mltsmb$ of a symbol $\smb$ is the number of times it appears in the sequence $X^n$. 
It can be easily shown that the ML estimate is exactly the \emph{empirical frequency}
estimator that assigns to each symbol the fraction of times it appears
in the sample,
\[
\maxp{X^n}(x)= \frac{\Mltsmb}{n}.
\]
For example, if $n=5$ and $X^5=\text{apple}$, then
$\maxp{X^5}(a)=\maxp{X^5}(l)=\maxp{X^5}(e)=1/5$ and $\maxp{X^5}(p)=2/5$. 
While for $n=6$ and $X^6=\text{google}$, $\maxp{X^5}(g)=\maxp{X^5}(o)=2/6$, and 
$\maxp{X^5}(l)=\maxp{X^5}(e)=1/6$.
It is well known that the ML estimate is \emph{consistent}, namely, approaches $p$ as $n\to\infty$.
Specifically, as $n$ increases, with high probability,
\[
||p -\maxp{X^n}||_1
\ed
\sum_{x} |p(x) - \maxp{X^n}(x)|
\le
\cO\left(\sqrt{\frac{k}{n}}\right).
\]
Hence with high probability, ML learns $p$ to $L_1$ distance $\dst$
with  $\cO(k/\dst^2)$ samples. It is also well known that up to a constant
factor, these many samples are needed by any technique, not just
ML~\cite{DevroyeL01}. 
This near-optimal approximation achieved by the ML distribution
estimator may lead one to hope that the ML plug-in estimator
will perform near-optimally for property estimation as well.
However, the next section show that this is not the case.
}

Suppose we want to estimate a symmetric property $f(\dP)$ of an
unknown distribution $\dP\in\cP$ given $n$ independent samples. Our
high level approach using PML is described below.

\medskip

\begin{center}
\fbox{
\begin{minipage}{5.3in}
\smallskip
{\bf Input:} $\cP$, symmetric function $\propp{\cdot}$, sample $X^n$
\begin{enumerate}
\medskip
\item
Compute $\mP \profile: \arg \max_{p \in \cP} p(\profile(X^n))$.
\medskip
\item
Output $\propp{\mP \profile}$. 
\smallskip
\end{enumerate} 
\end{minipage}
}
\end{center}
\medskip

There are a few advantages of this approach (as is true with any
plug-in approach): $(i)$ the computation of PML is agnostic to the
function $f$ at hand, $(ii)$ there are no parameters to be tuned,
$(iii)$ techniques such as Poisson sampling or median tricks are not
necessary, $(iv)$ well motivated by the maximum-likelihood principle.

{We remark that various aspects of PML have been studied.~\cite{OSVZ11} has a comprehensive collection of various results about PML.~\cite{orlitsky2004universal, OrlitskySZ03} study universal compression and probability estimation using PML distributions.~\cite{OSVZ11, PanAO09, orlitsky2009maximum} derive PML distribution for various special, and small length profiles.~\cite{OSVZ11, anevski2013estimating} prove consistency of PML. Subsequent to the first draft of the current work,~\cite{vatedka2016pattern} showed that both theoretically and empirically plug-in estimators obtained from the PML estimate yield good estimates for symmetric functionals of Markov distributions. }

\paragraph{Comparision to the linear-programming plug-in estimator~\cite{Valiant11b}.}
Our approach is perhaps closest in flavor to the plug-in estimator
of~\cite{Valiant11b}. Indeed, as mentioned in~\cite[Section 2.3]{Valiant12}, their linear-programming estimator is motivated by the question of estimating the PML.
Their result was the first estimator to provide sample complexity bounds in terms of the alphabet size, and accuracy the problems of entropy and support
estimation. Before we explain the differences of the two approaches,
we briefly explain their approach. Define, $\profile_\mlt(X^n)$ to be
the number of elements that appear $\mlt$ times. For example, when
$X^n = {a~b~r~a~c~a~d~a~b~r~a}$, $\profile_1=2, \profile_2=2,$ and
$\profile_5=1$.~\cite{Valiant11b} design a linear program that uses SML for high values of $\mlt$, and formulate a linear program to find a
distribution for which $\EE[\profile_\mlt]$'s are \emph{close} to the
observed $\profile_\mlt$'s. They then plug-in this estimate to estimate the
property. On the other hand, our approach, by the nature of ML principle, tries to find the distribution that best explains the entire profile of the observed data, not just some partial characteristics.  It therefore has the potential to estimate any symmetric property and estimate the distribution closely in any distance measures, competitive with the best possible. For example, the guarantees of the linear program approach are sub-optimal in terms of the desired accuracy $\dst$. For entropy estimation the optimal
dependence is $\frac1\dst$, whereas~\cite{Valiant11b} yields
$\frac1{\dst^2}$. This is more prominent for support size and support
coverage, which have optimal dependence of
$\mathrm{polylog}(\frac1\dst)$, whereas~\cite{Valiant11b} gives a
$\frac1{\dst^2}$ dependence. Besides, we analyze the first method
proposed for estimating symmetric properties, designed from the first
principles, and show that in fact it is competitive with the optimal
estimators for various problems.


\section{Proof outline}
\label{sec:outline}

Our arguments have two components. In Section~\ref{sec:pml_maximum} we prove a general result for the performance of plug-in estimation via maximum likelihood approaches.  

Let $\cP$ be a class of distributions over $\cZ$, and $f:\cP\to\RR$ be a function. For $z\in\cZ$, let
\[
\mP
z\ed\arg \max_{\dP\in\cP}\dPsmbz
\]
be the maximum-likelihood estimator of $z$ in $\cP$. Upon observing $z$, $f(\mP z)$ is  the ML estimator of $f$. In Theorem~\ref{thm:pml_property}, we show that if there is an estimator that achieves error probability $\delta$, then the ML estimator has an error probability at most $\delta|\cZ|$. We note that variations of this result in the asymptotic statistics were studied before (see~\cite{LehmannC98}).  Our contribution is to use these results in the context of symmetric properties and show sample complexity bounds in the non-asymptotic regime.

We \emph{emphasize that}, throughout this paper $\cZ$ will be the set of profiles of length $n$, and $\cP$ will be distributions induced over profiles by length-$n$ $i.i.d.$ samples. Therefore, we have $|\cZ|=|\Phi^n|$. By
Lemma~\ref{lem:num-profiles}, if there is a \emph{profile based} estimator with error probability $\delta$, then the PML approach will have error probability at
most $\delta\exp(3\sqrt n)$. Such arguments were used in hypothesis testing to show the existence of competitive testing algorithms for fundamental statistical problems~\cite{AcharyaDJOP11, AcharyaDJOPS12}.

At its face value this seems like a weak result. Our second key step
is to prove that for the properties we are interested, it is 
possible to obtain very sharp guarantees. For example, we show that if
we can estimate the entropy to an accuracy $\pm\dst$ with error
probability $1/3$ using $n$ samples, then we can estimate the
entropy to accuracy $\pm2\dst$ with error probability
$\exp(-n^{0.9})$ using only $2n$ samples. Using this sharp
concentration, the new error probability term dominates
$|\Phi^n|$, and we obtain our results. The arguments for
sharp concentration are based on modifications to existing
estimators and a new analysis. Most of these results are technical and
are in the appendix.

\section{Estimating properties via maximum likelihood}
\label{sec:pml_maximum}
In this section, we prove the performance guarantees of ML property estimation in a general set-up. Recall that $\cP$ is a collection of distributions over $\cZ$, and $f:\cP\to\RR$. Given a sample $Z$ from an unknown $\dP\in\cP$, we want to estimate $\propp{\dP}$. The maximum likelihood approach is the following two-step procedure.
\medskip
\begin{enumerate}
\item Find $\mP Z=\arg \max_{\dP\in\cP}\dPsmbZ$.
\item Output $f(\mP Z)$. 
\end{enumerate}

\medskip


\noindent We bound the performance of this approach in the following theorem.
\begin{theorem}
\label{thm:pml-property-exact}
Suppose there is an estimator $\fhat:\cZ\to\RR$, such that for any $\dP$, and $Z\sim\dP$, 
\begin{align}
\probof{\absv{\propp \dP
 - \hpropz Z}>\dst}< \err,\label{eqn:est-exists}
\end{align}
then 
\begin{align}
\probof{\absv{\propp \dP - \propp{\mP Z}}>2\dst}\le 
\err\cdot\absv{\cZ}.\label{eqn:exact-ml}
\end{align}
\end{theorem}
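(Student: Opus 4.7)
The plan is to exploit a single structural feature of the maximum-likelihood estimator, namely the domination
\[
\dP(Z)\;\le\;\max_{\dP'\in\cP}\dP'(Z)\;=\;\mP Z(Z)\qquad\text{for every }Z\in\cZ,\;\dP\in\cP.
\]
This inequality lets us reweight probabilities under $\dP$ by probabilities under $\mP Z$, at the cost of introducing a sum over the (at most $|\cZ|$) possible values of $\mP Z$. Note also that, by construction, $\mP Z\in\cP$, so the hypothesis~\eqref{eqn:est-exists} may be instantiated at $\dP'=\mP Z$.

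I would first split the target event by the triangle inequality. If $|\propp\dP-\propp{\mP Z}|>2\dst$, then at least one of
\[
\text{(A)}\;\;|\propp\dP-\hpropz Z|>\dst,\qquad \text{(B)}\;\;|\propp{\mP Z}-\hpropz Z|>\dst
\]
must hold. The probability of (A) under $\dP$ is strictly less than $\err$ by direct application of~\eqref{eqn:est-exists} to $\dP$, which is already much smaller than the claimed bound.

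The main work is bounding the probability of (B). I would enumerate the image of the map $Z\mapsto \mP Z$ as $q_1,\ldots,q_L$ with $L\le|\cZ|$, and partition $\cZ=\bigsqcup_{i=1}^L \cZ_i$ with $\cZ_i=\{Z:\mP Z=q_i\}$. For each $i$, let $E_{q_i}=\{Z:|\propp{q_i}-\hpropz Z|>\dst\}$. The hypothesis~\eqref{eqn:est-exists} applied to $q_i\in\cP$ gives $q_i(E_{q_i})<\err$. Now, on $\cZ_i$ event (B) coincides with $Z\in E_{q_i}$, so using the ML domination termwise,
\[
\dP\bigl(\{Z\in\cZ_i:\text{(B) holds}\}\bigr)\;=\;\sum_{Z\in\cZ_i\cap E_{q_i}}\dP(Z)\;\le\;\sum_{Z\in\cZ_i\cap E_{q_i}} q_i(Z)\;\le\; q_i(E_{q_i})\;<\;\err.
\]
Summing over $i=1,\ldots,L$ yields $\Pr(\text{(B)})<L\err\le|\cZ|\err$. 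Combining with the bound on (A) then gives the claim (up to the harmless additive $\err$ which is absorbed by $\err|\cZ|$).

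\paragraph{Main obstacle.} The only nontrivial step is transferring a tail bound that the hypothesis~\eqref{eqn:est-exists} gives under the \emph{true} distribution $\dP$ to a tail bound for the \emph{data-dependent} distribution $\mP Z$. Without some structural control on $\mP Z$, this transfer is impossible. The ML domination $\dP(Z)\le\mP Z(Z)$ supplies exactly the needed pointwise inequality, but only one $Z$ at a time; the essential idea is therefore to \emph{freeze} the value $q_i$ of $\mP Z$ on each slice $\cZ_i$, and then apply the hypothesis at $q_i$ (which is legal precisely because $\mP Z\in\cP$). The price is the sum $\sum_i\err$, which is why the final bound carries the factor $|\cZ|$; this is the source of the $\exp(3\sqrt n)$ blowup when we later specialize $\cZ=\Phi^n$ via Lemma~\ref{lem:num-profiles}, motivating the subsequent need for sharp-concentration estimators in Section~\ref{sec:approach}.
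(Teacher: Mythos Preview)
Your argument is correct, but the organization differs from the paper's. The paper thresholds on the value of $\dP(z)$: for outcomes $z$ with $\dP(z)\ge\err$, it argues \emph{pointwise} that both $|\propp\dP-\hpropz z|<\dst$ and $|\propp{\mP z}-\hpropz z|<\dst$ must hold (otherwise the hypothesis~\eqref{eqn:est-exists} would be violated at $\dP$ or at $\mP z$, each of which assigns mass $\ge\err$ to the single outcome $z$), so by the triangle inequality the ML plug-in is $2\dst$-accurate \emph{deterministically} on such $z$. All error is then confined to $\{z:\dP(z)<\err\}$, whose $\dP$-mass is trivially $<\err|\cZ|$. Your approach instead union-bounds the two failure modes (A) and (B) and controls (B) by partitioning $\cZ$ according to the realized value of $\mP Z$ and invoking the domination $\dP(z)\le\mP z(z)$ slice by slice. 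Both routes rest on the same two ingredients---the ML domination and the applicability of~\eqref{eqn:est-exists} to $\mP z\in\cP$---but the paper's threshold argument is shorter and delivers the stated constant exactly, whereas your union bound yields $\err(L+1)\le\err(|\cZ|+1)$; your parenthetical ``absorbed by $\err|\cZ|$'' is not literally correct when $L=|\cZ|$, though the discrepancy is immaterial for the downstream application to $\cZ=\Phi^n$.
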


\noindent \textit{Proof.}
Consider symbols with $\dPsmbz \ge \err$ and $\dPsmbz < \err$
separately. A distribution $\dP$ with $\dPsmbz\ge\err$ outputs $z$ with probability at least $\err$. For~\eqref{eqn:est-exists} to hold, we must have, 
$
\absv{\propp {\dP} - \hpropz z}<\dst.
$
By the definition of ML, $\mP z(z) \ge \dPsmbz\ge\err$, and again for~\eqref{eqn:est-exists} to hold for $\mP z$, $\absv{\propp {\mP z} - \hpropz z}<\dst$. By the triangle inequality, for all such $\smbz$, 
\begin{align*}
\absv{\propp {\dP} -\propp {\mP z}}\le \absv{\propp {\dP} - \hpropz
    z}+ \absv{\propp {\mP z} - \hpropz z} \le 2\dst.
\end{align*}
Thus if $p(z) \geq \err$, then PML satisfies the required guarantee
with zero probability of error, 
and any error occurs only when  $\dP(z)<\err$. We bound this
probability as follows. When $Z\sim\dP$, 
\[
\probof{\dPsmbZ<\err} \le \sum_{\smbz\in\cZ:\dPsmbz<\err}\dPsmbz <
\err\cdot\absv{\cZ}. \eqed
\]
For some problems, it might be easier to just approximate the ML, instead of finding it exactly. We define an approximation ML as follows:
\begin{definition}[$\beta$-approximate ML]
\label{def:appr}
Let $\beta\le1$. For $Z\in\cZ$, $\tdP_Z\in\cP$ is a $\beta$-approximate ML distribution if
$\tdP_z (z)\ge \beta\cdot \mP z(z)$. When $\cZ$ is profiles of length-$n$, a $\beta$-approximate PML is a distribution $\tdP_\profile$ such that $\tdP_\profile(\profile)\ge\beta\cdot \dP_{\profile}(\profile)$. 
\end{definition}

The next result proves guarantees for any $\beta$-approximate ML estimator.
\begin{theorem}
\label{thm:pml_property}
Suppose there exists an estimator satisfying~\eqref{eqn:est-exists}. For any $\dP\in\cP$ and $Z\sim\dP$, any $\beta$-approximate ML $\tdP_Z$ satisfies:
\begin{align}
\probof{\absv{\propp \dP - \propp{\tilde{p}_Z}}>2\dst} \leq 
\frac{\err\cdot\absv{\cZ}}{\beta}.\nonumber
\end{align} 
\end{theorem}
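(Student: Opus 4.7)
The plan is to mimic the proof of Theorem~\ref{thm:pml-property-exact} almost verbatim, with the threshold $\delta$ replaced by $\delta/\beta$ to compensate for the slack allowed in the $\beta$-approximate ML. Concretely, I would split the sample space $\cZ$ into a ``heavy'' region $H = \{z : p(z) \ge \delta/\beta\}$ and a ``light'' region $L = \cZ \setminus H$, and show that the inequality fails only on $L$, whose total $p$-mass is bounded by $\delta\cdot|\cZ|/\beta$.

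For $z \in H$, first note that since $p(z) \ge \delta/\beta \ge \delta$, the hypothesis~\eqref{eqn:est-exists} applied to the true distribution $p$ yields $|f(p) - \hat f(z)| \le \delta$. Next, by definition of ML we have $p_z(z) \ge p(z) \ge \delta/\beta$, and therefore the $\beta$-approximate ML distribution satisfies
\[
\tilde p_z(z) \;\ge\; \beta \cdot p_z(z) \;\ge\; \beta \cdot \frac{\delta}{\beta} \;=\; \delta.
\]
This is the crucial point: the factor $\beta$ in the approximation is exactly cancelled by relaxing the threshold from $\delta$ to $\delta/\beta$, so that $\tilde p_z$ still assigns probability at least $\delta$ to $z$. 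Hence~\eqref{eqn:est-exists} applies to $\tilde p_z$ as well, giving $|f(\tilde p_z) - \hat f(z)| \le \delta$, and the triangle inequality yields $|f(p) - f(\tilde p_z)| \le 2\delta$ with zero probability of error on $H$.

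It remains to bound the probability that $Z \in L$. Since each $z \in L$ has $p(z) < \delta/\beta$,
\[
\Pr(Z \in L) \;=\; \sum_{z \in L} p(z) \;<\; \frac{\delta}{\beta} \cdot |\cZ|,
\]
which is exactly the claimed bound. I don't foresee any real obstacle: the whole argument is a one-line modification of the exact-ML proof, and the only step that requires care is verifying that the $\beta$ in $\tilde p_z(z) \ge \beta \cdot p_z(z)$ cancels against the enlarged threshold $\delta/\beta$. Specializing $\cZ$ to the set $\Phi^n$ of length-$n$ profiles and using Lemma~\ref{lem:num-profiles} to bound $|\cZ| \le e^{3\sqrt n}$ then recovers the concrete PML-competitiveness statement quoted in the introduction.
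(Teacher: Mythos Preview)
Your proposal is correct and follows essentially the same argument as the paper: split on the threshold $p(z)\gtrless \delta/\beta$, use $\tilde p_z(z)\ge \beta\cdot p_z(z)\ge \beta\cdot p(z)\ge \delta$ on the heavy side so that \eqref{eqn:est-exists} applies to both $p$ and $\tilde p_z$, and then bound the mass of the light side by $\delta|\cZ|/\beta$. The only cosmetic difference is that the paper phrases the application of \eqref{eqn:est-exists} via an indicator-function argument, but the substance is identical.
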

The proof is very similar to the previous theorem and is presented in
the Appendix~\ref{app:approx-ml}.

\subsection{Competitiveness of ML estimators via median trick}

Suppose for a property $f(p)$, there is an estimator with sample
complexity $n$ that achieves an accuracy $\pm\dst$ with probability of
error at most $1/3$.  The standard method to boost the error
probability is the median trick: (i) Obtain $O(\log(1/\delta))$
independent estimates using $O(n\log(1/\delta))$ independent
samples. (ii) Output the median of these estimates. This is an
$\dst$-accurate estimator of $f(p)$ with error probability at most
$\delta$.}  By definition, estimators are a mapping from the samples
  to $\RR$. However, in many applications the estimators map from a
  much smaller (some sufficient statistic) of the samples. Denote by
  $Z_n$ the space consisting of all sufficient statistics that the
  estimator uses. For example, estimators for symmetric properties,
  such as entropy typically use the profile of the sequence, and hence
  $Z_n=\Phi^n$. Using the median-trick, we get the following result.

\begin{corollary}
\label{cor:competitive}
Let $\fhat:Z_n\to \RR$ be an estimator of $f(p)$ with accuracy $\dst$
and error-probability $1/3$. The ML estimator achieves accuracy
$2\dst$ using
\[
\min \left\{n': \frac{n'}{20\log(3Z_{n'})}\right\}  > n.
\]	
\end{corollary}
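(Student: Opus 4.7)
The plan is to combine the standard median-of-repeats boosting with Theorem~\ref{thm:pml-property-exact}. That theorem loses a multiplicative factor of $|Z_{n'}|$ in the error probability when passing to the ML plug-in, so the strategy is to first drive the error of some auxiliary $\dst$-accurate estimator on $n'$ samples below $1/(3|Z_{n'}|)$ by median-boosting, and then transfer the guarantee to $f(\mP Z)$ via Theorem~\ref{thm:pml-property-exact} applied at accuracy $\dst$ and failure probability roughly $e^{-k/c_0}$ for suitably large $k$.

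Concretely, given $\fhat$ with accuracy $\dst$ and error $1/3$ on $n$ samples, I would partition an $n'$-sample into $k=\lfloor n'/n\rfloor$ disjoint blocks of size $n$, apply $\fhat$ to each block, and take the median $\fhat^{\mathrm{med}}$ of the resulting $k$ independent estimates. Each block-estimate misses $f(p)$ by more than $\dst$ independently with probability at most $1/3$, so a Chernoff (or KL) bound on the event that $\mathrm{Binomial}(k,1/3)$ exceeds $k/2$ yields
\[
\Pr\bigl(|\fhat^{\mathrm{med}}-f(p)|>\dst\bigr)\;\le\;e^{-k/c_0}
\]
for an absolute constant $c_0$ (one can take $c_0\le 18$ via $D(1/2\|1/3)\approx 0.058$). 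Because the blocks are disjoint and each block-estimate depends only on its block's sufficient statistic, $\fhat^{\mathrm{med}}$ is a deterministic function of the full $n'$-sample sufficient statistic, i.e., a function $Z_{n'}\to\RR$.

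Applying Theorem~\ref{thm:pml-property-exact} to $\fhat^{\mathrm{med}}$ at sample size $n'$ then yields
\[
\Pr\bigl(|f(\mP Z)-f(p)|>2\dst\bigr)\;\le\;|Z_{n'}|\cdot e^{-k/c_0},
\]
which is at most $1/3$ whenever $k/c_0\ge\log(3|Z_{n'}|)$, equivalently whenever $n'\ge c_0\, n\log(3|Z_{n'}|)$. Choosing $c_0=20$ (absorbing both the rounding in $\lfloor n'/n\rfloor$ and a small buffer above the Chernoff exponent) recovers the stated sufficient condition $\tfrac{n'}{20\log(3|Z_{n'}|)}>n$.

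The only step requiring any care is verifying that the median estimator is a function of $Z_{n'}$, so that Theorem~\ref{thm:pml-property-exact} may be invoked with cardinality $|Z_{n'}|$ rather than that of some larger underlying sample space; the disjoint-blocks construction makes this immediate, and no deeper obstacle arises. In the intended application $Z_n=\Phi^n$, Lemma~\ref{lem:num-profiles} gives $\log|Z_{n'}|\le 3\sqrt{n'}$, so the condition reduces to $n'\gtrsim n\sqrt{n'}$, i.e., $n'=O(n^2)$, which is the quantitative form of PML competitiveness for arbitrary symmetric properties.
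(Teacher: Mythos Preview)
Your approach coincides with the paper's: median-boost $\fhat$ on $\lfloor n'/n\rfloor$ disjoint blocks and then invoke Theorem~\ref{thm:pml-property-exact}; the arithmetic and the constant $20$ match.

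The one place you flag as delicate is, however, not resolved by the disjoint-blocks construction. Your $\fhat^{\mathrm{med}}$ is a function of the tuple of block statistics in $(Z_n)^k$, not of the single combined statistic $Z_{n'}$: the latter does not determine the former. In the intended case $Z_n=\Phi^n$, the overall profile $\varphi(X^{n'})$ does not fix the block profiles---e.g.\ with $n'=4$, $n=2$, the sequence $a\,a\,b\,b$ has overall profile $\{2,2\}$, yet the two block profiles are $(\{2\},\{2\})$ or $(\{1,1\},\{1,1\})$ depending on how the samples are ordered---so $\fhat^{\mathrm{med}}$ is not a map $\Phi^{n'}\to\RR$, and Theorem~\ref{thm:pml-property-exact} cannot be invoked with $|\cZ|=|\Phi^{n'}|$ as written. (The paper's own one-line proof glosses over exactly this point.) The fix is short: since the $X_i$ are i.i.d., precomposing $\fhat^{\mathrm{med}}$ with a uniformly random permutation of the $n'$ sample positions leaves its worst-case error unchanged, and the resulting randomized estimator now depends only on $\varphi(X^{n'})$; one then either reruns the proof of Theorem~\ref{thm:pml-property-exact} for a randomized $\hat f$ (losing a factor $2$ in $\err$) or derandomizes by taking the median over the internal randomness (same loss). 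Either adjustment is absorbed by the constant $20$, and with it your argument goes through.
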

\begin{proof}
Since $n$ is the number of samples to get an error probability $1/3$,
by the Chernoff bound, the error after $n'$ samples is at most $\exp(-
(n'/(20n)))$. Therefore, the error probability of the ML estimator for
accuracy $2\dst$ is at most $\exp(- (n'/(20n)))Z_{n'}$, which we desire
to be at most $1/3$.
\end{proof}

For estimators that use the profile of sequences,
$|\Phi^n|<\exp(3\sqrt{n})$. Plugging this in the previous result shows
that the PML based approach has a sample complexity of at most
$\cO(n^2)$. This result holds for all symmetric properties, independent of
$\dst$, and the alphabet size $k$. For the problems mentioned earlier,
something much better in possible, namely the PML approach is optimal
up to constant factors.

\section{Sample optimality of PML}
\label{sec:approach}
\ignore{
We first prove that PML is \emph{competitive} with respect to the best
estimator for \emph{any symmetric property}. This is a side-result and
a direct corollary of the results in the previous section. We then
provide much sharper concentration bounds for estimating the
properties we are considering, and use it to prove the optimality of
PML.
\subsection{Median trick and competitiveness of PML}
Suppose for a property $f(p)$, there is an estimator with sample complexity $n$ that achieves an accuracy $\pm\dst$ with probability of error at most $1/3$. 
The standard method to boost the error probability is the median trick: (i) Obtain $O(\log(1/\delta))$ independent estimates using $O(n\log(1/\delta))$ independent samples. (ii) Output the median of these estimates. This is an $\dst$-accurate estimator of $f(p)$ with error probability at most $\delta$.
By definition, estimators are a mapping from the samples to $\RR$. However, in many applications the estimators map from a much smaller (some sufficient statistic) of the samples. Denote by $Z_n$ the space consisting of all sufficient statistics that the estimator uses. For example, estimators for symmetric properties, such as entropy typically use the profile of the sequence, and hence $Z_n=\Phi^n$. Using the median-trick, we get the following result.
\begin{corollary}
\label{cor:competitive}
Let $\fhat:Z_n\to \RR$ be an estimator of $f(p)$ with accuracy $\dst$ and error-probability $1/3$. The ML estimator achieves accuracy $2\dst$ using
\[
\min \left\{n': \frac{n'}{\log(3Z_{n'})}\right\}  > n.
\]	
\end{corollary}
\begin{proof}
Since $n$ is the number of samples to get an error probability $1/3$, by the median trick, the error after $n'$ samples is at most $\exp(- O(n'/n))$. Therefore, the error probability of the ML estimator for accuracy $2\dst$ is at most $\exp(- O(n'/n))Z_{n'}$, which we desire to be at most $1/3$.
\end{proof}
For estimators that use the profile of sequences, $|\Phi^n|<\exp(3\sqrt{n})$. Plugging this in the previous result shows that the PML based approach has a sample complexity of at most $n^2$. This result holds for all symmetric properties, independent of $\dst$, and the alphabet size $k$. For the problems mentioned earlier, something much better in possible, namely the PML approach is optimal up to constant factors.
}
\subsection{Sharp concentration for some interesting properties}

To obtain sample-optimality guarantees for PML, we need to drive the error probability down much faster than the median trick. We achieve this by using McDiarmid's inequality stated below.
%
Let $\fhat:\cX^*\to\RR$. Suppose $\fhat$ gets $\nsmp$ independent samples $X^n$ from an unknown distribution. Moreover, changing one of the $X_j$ to any $X_j'$ changed $\fhat$ by at most $c_*$. Then McDiarmid's inequality (bounded difference inequality,~\cite[Theorem 6.2]{BoucheronLM13}) states that,
\begin{align}
\probof{\absv{\fhat(X^n) - \EE[\fhat(X^n)]}>t}\le 2\exp
\left(-\frac{2t^2}{n c^2_*} \right).\label{eqn:mcdia-one}
\end{align}
This inequality can be used to show strong error probability bounds for many problems. We mention a simple application for estimating discrete distributions. 
\begin{example}
It is well known~\cite{DevroyeL01} that SML requires $\Theta(k/\dst^2)$ samples  to estimate $\dP$ in $\ell_1$ distance with probability at least $2/3$. In this case, $\fhat(X^n) = \sum_{\smb}\absv{\frac{\Mltsmb}\nsmp-\dPsmb}$, and therefore $c_*$ is at most $2/n$. Using McDiarmid's inequality, it follows that SML has an error probability of $\delta = 2\exp(-k/2)$, while still using $\Theta (k/\dst^2)$ samples. 
\end{example}


Let $B_n$ be the bias of an estimator $\fhat(X^n)$ of $f(\dP)$, namely 
\begin{align}
B_n\ed\absv{f(\dP)-\EE[\fhat(X^n)]}.\nonumber
\end{align}
By the triangle inequality, 
\begin{align}
\absv{f(\dP)-\fhat(X^n)} \leq \absv{f(\dP)-\EE[\fhat(X^n)]} +\absv{\fhat(X^n)-\EE[\fhat(X^n)]} = B_n + \absv{\fhat(X^n)-\EE[\fhat(X^n)]}. \nonumber 
\end{align}
Plugging this in~\eqref{eqn:mcdia-one},
\begin{align}
\probof{\absv{f(\dP)-\fhat(X^n)]}>t+B_n}\le 2\exp \left(-\frac{2t^2}{n c^2_*} \right).\label{eqn:mcdia-two}
\end{align}

With this in hand, we need to show that $c_*$ can be bounded for estimators for the properties we consider. In particular, we will show that
\begin{lemma}
  \label{lem:prop-of-interest}
 Let $\alpha>0$ be a fixed constant. 
  For entropy, support, support coverage, and distance to uniformity there exist profile based
  estimators that use the optimal number of samples (given in
  Table~\ref{tab:results}), have bias $\dst$ and if we change
  any of the samples, changes by at most $c \cdot \frac{\nsmp^\alpha}{\nsmp}$, where $c$ is a positive constant. 
\end{lemma}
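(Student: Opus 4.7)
The plan is to prove the lemma property-by-property, by taking the optimal estimators already constructed in the cited papers (Wu-Yang and Jiao-Venkat-Han-Weissman for entropy and support, \cite{JiaoHW16} for distance to uniformity, and \cite{OrlitskySW16} for support coverage), rewriting each as a profile-based additive statistic, and then bounding the one-step forward difference of the per-symbol summand. All four estimators can be put in the common form
\[
\hat f(X^n)=\sum_{x\in\cX} g(\Mltx),
\]
where $g:\{0,1,\dots,n\}\to\RR$ depends on the property (and, for support coverage, on the target parameter $m$). Since modifying one sample from $a$ to $b$ changes only $\Mlts{a}$ and $\Mlts{b}$ by $\pm 1$, the overall change in $\hat f$ is at most $2\max_{0\le j<n}|g(j+1)-g(j)|$. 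Hence the whole lemma reduces to bounding these forward differences by $c\cdot n^{\alpha-1}$ for some constant $\alpha>0$, while inheriting the bias bound $\dst$ and sample complexity from the original papers.

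For entropy, support, and distance to uniformity, the optimal estimators are of polynomial-approximation type: for small multiplicities, $g$ is a (scaled and shifted) best-polynomial approximation of degree $L=\Theta(\log \absz)$ of the target function ($-y\log y$, $\mathds{1}\{y>0\}$, and $|y-\tfrac{1}{\absz}|$, respectively), and for large multiplicities $g$ is the plug-in (with bias correction). Standard bounds on Chebyshev-type coefficients imply that $|g(j+1)-g(j)|$ is bounded by a fixed polynomial in $n$, and a careful tracking of constants shows $|g(j+1)-g(j)|\le c\cdot n^{\alpha-1}$ for any fixed $\alpha>0$. In the large-multiplicity regime, $g$ is close to the plug-in, whose one-step differences are $O(\log n / n)$, which is even smaller. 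Combining the two regimes gives the desired $c_*$ bound.

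For support coverage, I would use the smoothed Good-Toulmin estimator of \cite{OrlitskySW16}, where $g(j)$ is essentially $1-(1-t)^j$ truncated for small $j$ and replaced by $\mathds{1}\{j>0\}$ for large $j$, for a carefully chosen $t=\Theta(1)$. In both regimes $|g(j+1)-g(j)|$ is trivially bounded: in the small-$j$ regime by $t(1-t)^j\le t$, and in the large-$j$ regime by a quantity that is exponentially small. After normalizing by the $m$ in the statement $S_m(p)/m$, the forward difference is at most $c/n$, which is well within the required $c\cdot n^{\alpha-1}$.

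The main obstacle is bookkeeping. Each of the four papers phrases its estimator slightly differently---some under Poissonization, some with different truncation thresholds, and some defined implicitly through linear programs or best-approximation dual forms. To obtain a clean additive representation $\sum_x g(\Mltx)$ with explicit coefficients, one must first de-Poissonize via a standard averaging argument (paying only a constant factor in the sample size), then rewrite the estimator in integer multiplicities, and finally carry out an explicit bound on $|g(j+1)-g(j)|$ using the growth rates of the polynomial coefficients. This property-by-property manipulation is the most delicate step, and I expect it to occupy most of the appendix; but in each case the bound is of the form $n^{\alpha-1}$ with $\alpha$ an arbitrarily small positive constant, which together with \eqref{eqn:mcdia-two} produces the sharp concentration needed to dominate the $\exp(3\sqrt n)$ loss from Theorem~\ref{thm:pml_property}.
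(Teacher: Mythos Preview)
Your high-level strategy---write each estimator as an additive profile statistic and bound one-step differences for McDiarmid---matches the paper's. But two of the four cases are mis-specified in ways that break the argument. For entropy and distance to uniformity, the optimal estimators of \cite{WuY14a,JiaoVHW15,JiaoHW16} are not of the form $\sum_x g(\Mltx)$: they split the $2n$ samples into two halves, using the first-half count $\Mltsmb'$ to choose the regime and the second-half count $\Mltsmb$ to estimate, so the summand is $g(\Mltsmb',\Mltsmb)$ and your ``inherit the bias bound from the original papers'' step does not apply to a single-count version. More importantly, as written those estimators do \emph{not} have $c_*=O(n^{\alpha-1})$: on the event $\{\Mltsmb'<c_2\log n,\ \Mltsmb\ge c_1\log n\}$ the polynomial piece $G_{L,g}(\Mltsmb)$ is evaluated far outside its approximation interval and can be enormous (its coefficients grow like $2^{\Theta(L)}$). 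The paper's key step---which your plan omits---is to \emph{modify} the estimator by forcing $g_x=0$ on this event, and then to check separately that the event has probability $O(n^{-4.9})$ so the added bias is negligible. Without this clip, the forward-difference bound you assert does not hold.

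For support coverage, your description of the smoothed Good--Toulmin estimator is incorrect. The per-multiplicity coefficient in \cite{OrlitskySW16} is $1+(-t)^i\Pr(Z\ge i)$ with $t=(m-n)/n$ and $Z\sim\poi(r)$; at the optimal sample size $t$ is of order $\log m/\log(1/\dst)$, not $\Theta(1)$, and the maximum coefficient is $1+e^{r(t-1)}$, not $t$. The paper takes $r=\log(3/\dst)$ and shows $e^{r(t-1)}\le n^\alpha$ precisely when $n\gtrsim (m/\log m)\log(1/\dst)$---this is how the optimal sample complexity enters the $c_*$ bound. Your formula ``$1-(1-t)^j$'' with $t=\Theta(1)$ and the bound $|g(j{+}1)-g(j)|\le t$ do not describe this estimator and would not recover the right dependence on $m$ and $\dst$.
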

We prove this lemma by proposing several modifications to the existing sample-optimal estimators. The modified estimators will preserve the sample complexity up to constant factors and also have a small $c_*$. The proof details are given in the appendix.

Using~\eqref{eqn:mcdia-two} with Lemma~\ref{lem:prop-of-interest},
\begin{theorem}
Let $n$ be the optimal sample complexity of estimating entropy, support, support coverage and distance to uniformity (given in table~\ref{tab:results}) and $c$ be a large positive constant. Let $\dst \geq c/n^{0.2}$, then any for any $\beta>\exp\left(-\sqrt n\right)$, the $\beta$-PML estimator estimates entropy, support, support coverage, and distance to uniformity to an accuracy of $4\dst$ with probability at least $1-\exp(-\sqrt n)$.\footnote{{The above theorem also works for any $\dst \gtrsim 1/n^{0.25-\eta}$ for any $\eta > 0$}}
\end{theorem}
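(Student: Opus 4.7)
The plan is to combine three ingredients already in place: the existence of a well-concentrated near-optimal estimator (Lemma~\ref{lem:prop-of-interest}), McDiarmid's bounded-difference inequality in the form of~\eqref{eqn:mcdia-two}, and the general competitiveness bound for $\beta$-approximate ML estimators (Theorem~\ref{thm:pml_property}). The role of Lemma~\ref{lem:prop-of-interest} is to upgrade the $1/3$-probability guarantees that define the optimal sample complexity into nearly-exponential concentration, and the role of Theorem~\ref{thm:pml_property} is to transfer that concentration to any $\beta$-PML plug-in estimator while paying only a $|\Phi^n|/\beta$ factor in the failure probability; the whole argument amounts to checking that the parameters line up.

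First, for each of the four properties I would fix a small enough constant $\alpha > 0$ (concretely $\alpha < 0.05$ will do) and invoke Lemma~\ref{lem:prop-of-interest} to obtain a profile-based estimator $\fhat$ that uses $n$ samples, has bias $B_n \leq \dst$, and satisfies the bounded-difference condition with constant $c_* \leq c\, n^{\alpha - 1}$. Plugging these values into~\eqref{eqn:mcdia-two} with $t = \dst$ yields
\begin{align*}
\probof{\absv{f(\dP) - \fhat(X^n)} > 2\dst} \leq 2\exp\!\left(-\frac{2\dst^2}{n c_*^2}\right) \leq 2\exp\!\left(-\frac{2 \dst^2 n^{1-2\alpha}}{c^2}\right).
\end{align*}
Using the hypothesis $\dst \geq c/n^{0.2}$ gives $\dst^2 n^{1-2\alpha} \geq c^2 n^{0.6 - 2\alpha}$, so with $\alpha$ chosen so that $0.6 - 2\alpha > 0.5$, the right-hand side is at most $\exp(-6\sqrt{n})$ for all sufficiently large $n$. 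Thus $\fhat$ achieves accuracy $2\dst$ with failure probability $\err \leq \exp(-6\sqrt{n})$.

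Next, I would feed this $\fhat$ and this $\err$ into Theorem~\ref{thm:pml_property}, treating $\cZ = \Phi^n$ as the space of length-$n$ profiles and recalling from Lemma~\ref{lem:num-profiles} that $|\Phi^n| \leq \exp(3\sqrt{n})$. The theorem gives that any $\beta$-approximate PML estimator $\tdP$ satisfies
\begin{align*}
\probof{\absv{f(\dP) - f(\tdP)} > 4\dst} \leq \frac{\err \cdot |\Phi^n|}{\beta} \leq \frac{\exp(-6\sqrt{n}) \cdot \exp(3\sqrt{n})}{\beta}.
\end{align*}
Since the hypothesis allows $\beta \geq \exp(-\sqrt{n})$, the denominator contributes at most another factor of $\exp(\sqrt{n})$, so the total failure probability is at most $\exp(-\sqrt{n})$ and the accuracy bound becomes $2 \cdot 2\dst = 4\dst$, exactly as claimed.

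The main obstacle is really the numerology in the second paragraph: we need the McDiarmid exponent to simultaneously beat the $3\sqrt{n}$ entropy-of-profiles penalty from Lemma~\ref{lem:num-profiles}, the $\sqrt{n}$ slack absorbed by the approximation parameter $\beta$, and the target failure probability $\exp(-\sqrt{n})$. This is precisely why the threshold $\dst \gtrsim n^{-0.2}$ (and, as the footnote indicates, any $\dst \gtrsim n^{-0.25 + \eta}$) appears, and why the lemma is stated with a free constant $\alpha > 0$ that can be taken as small as one wishes. Verifying Lemma~\ref{lem:prop-of-interest} itself, i.e.\ exhibiting for each of the four properties a modification of the known optimal estimator whose bounded-difference constant is only a factor $n^{\alpha}$ above the trivial $1/n$, is the genuinely technical piece, but it is deferred to the appendix; once that lemma is granted, the theorem above follows by the two-line computation sketched here.
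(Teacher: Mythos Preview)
Your proposal is correct and follows essentially the same three-step approach as the paper: invoke Lemma~\ref{lem:prop-of-interest} to get a bounded-difference estimator, apply McDiarmid via~\eqref{eqn:mcdia-two} to obtain near-exponential concentration, and then use Theorem~\ref{thm:pml_property} together with Lemma~\ref{lem:num-profiles} to transfer the guarantee to the $\beta$-approximate PML. You are in fact slightly more careful with the numerology than the paper's own proof (you correctly carry the exponent $1-2\alpha$ from $nc_*^2$, and you choose $\alpha$ small enough to leave room for the $\exp(3\sqrt n)$ and $1/\beta$ penalties), but the argument is the same.
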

\begin{proof}
Let $\alpha = 0.1$.  By Lemma~\ref{lem:prop-of-interest}, for each property of interest, there are estimators based on the profiles of the samples such that using near-optimal number of samples, they have bias $\dst$ and maximum change if we change any of the samples is at most $n^{\alpha}/n$. Hence, by McDiarmid's inequality, an accuracy
of $2\dst$ is achieved with probability at least $1-\exp\left(-2\dst^2
n^{1-a}/c^2\right)$. Now suppose $\tdP$ is any $\beta$-approximate
PML distribution. Then by Theorem~\ref{thm:pml_property}
\[
\probof{\absv{f(\dP)-f(\tdP)}>4\dst}< \frac{\delta
  \cdot|\Phi^n|}{\beta} \le \frac{\exp(-2\dst^2n^{1-a}/c^2)\exp(3\sqrt
  n)}{\beta}\le \exp(-\sqrt n),
\]
where in the last step we used $\dst^2 n^{1-a}\gtrsim c'\sqrt n$, and
$\beta>\exp(-\sqrt \nsmp)$.
\end{proof}

\section{Discussion and future directions}
We studied estimation of symmetric properties of discrete
  distributions using the principle of maximum likelihood, and proved
  optimality of this approach for a number of problems. A number of
  directions are of interest.  We believe that the lower bound
  requirement on $\dst$ is perhaps an artifact of our proof technique,
  and that the PML based approach is indeed optimal for all ranges of
  $\dst$.
Approximation algorithms for estimating the PML
  distributions would be a fruitful direction to pursue. Given our
  results, approximations stronger than $\exp(-\dst^2 n)$ would be
  very interesting. In the particular case when the desired accuracy
  is a constant, even an exponential approximation would be sufficient
  for many properties.  We plan to apply the heuristics proposed
  by~\cite{Vontobel12} for various problems we consider, and compare
  with the state of the art provable methods. 

%
%




\section*{Acknowledgements}
We thank Ashkan Jafarpour, Shashank Vatedka, Krishnamurthy Viswanathan, and 
Pascal Vontobel for very helpful discussions and suggestions.   

\bibliographystyle{alpha}
\bibliography{abr,masterref}
\appendix
\section{Support and support coverage}
We analyze both support coverage and the support estimation via a single
approach. We first start with support coverage. Recall that the goal
is to estimate $S_m(p)$, the expected number of distinct symbols that
we see after observing $m$ samples from $p$.  By the linearity of
expectation,
\[
S_m(p) = \sum_{x \in \cX} \EE[\II_{N_x(X^m) > 0}] = \sum_{x \in \cX} \left( 1 - (1-p(x))^m\right).
\]
The problem is closely related to the support coverage problem~\cite{OrlitskySW16},
where the goal is to estimate $U_{t}(X^n)$, the number of new distinct
symbols that we observe in $n \cdot t$ additional samples. Hence
\[
S_{m}(p) = \EE\left[ \sum^n_{i=1} \prv_i\right] + \EE[U_t],
\]
where $t = (m-n)/n$.  We show that the modification of an estimator
in~\cite{OrlitskySW16} is also near-optimal and satisfies conditions in
Lemma~\ref{lem:prop-of-interest}.  We propose to use the following estimator
\[
\hat{S}_m(p) = \sum^n_{i=1} \prv_i + \sum^n_{i=1} \prv_{i} (-t)^i \Pr(Z \geq i),
\]
where $Z$ is a Poisson random variable with mean $r$ and $t
=(m-n)/n$. We remark that the proof also holds for Binomial smoothed
random variables as discussed in~\cite{OrlitskySW16}.

We need to bound the maximum coefficient and the bias to apply
Lemma~\ref{lem:prop-of-interest}. We first bound the maximum coefficient of this
estimator.
\begin{lemma}
  \label{lem:unco}
For all $n \leq m/2$, the maximum coefficient of $\hat{S}_m(p)$ is
at most 
\[
1 + e^{r(t-1)}.
\]
\end{lemma}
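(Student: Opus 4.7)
The estimator is an affine combination of the profile counts, so the plan is to read off the coefficient of $\prv_i$ directly and then bound it term by term. Writing the coefficient as
\[
a_i \;=\; 1 + (-t)^i \Pr(Z \geq i),
\]
the triangle inequality immediately gives $|a_i| \leq 1 + t^i \Pr(Z \geq i)$, so the lemma reduces to showing $t^i \Pr(Z \geq i) \leq e^{r(t-1)}$ uniformly in $i$.

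The key structural observation is that the hypothesis $n \leq m/2$ is equivalent to $t = (m-n)/n \geq 1$. This is precisely the condition that lets us push the factor $t^i$ inside the Poisson tail. Expanding $\Pr(Z \geq i) = e^{-r}\sum_{k \geq i} r^k/k!$ and using $t \geq 1$ to replace $t^i$ by the larger $t^k$ whenever $k \geq i$, I would write
\[
t^i \Pr(Z \geq i) \;=\; e^{-r} \sum_{k \geq i} t^i \frac{r^k}{k!} \;\leq\; e^{-r} \sum_{k \geq i} t^k \frac{r^k}{k!} \;\leq\; e^{-r} \sum_{k \geq 0} \frac{(tr)^k}{k!} \;=\; e^{-r}\, e^{tr} \;=\; e^{r(t-1)}.
\]
Combining with the triangle inequality yields $|a_i| \leq 1 + e^{r(t-1)}$ for every $i \in \{1,\dots,n\}$, which is the claimed bound on the maximum coefficient.

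There is no substantive obstacle here; the only subtlety is recognizing that the quantitative hypothesis $n \leq m/2$ is being used solely to secure $t \geq 1$, and that this single inequality is exactly what permits the monotone replacement $t^i \leq t^k$ inside the Poisson tail so that the sum can be completed into the full exponential series. The argument is independent of the specific value of the Poisson mean $r$, which is why the bound appears as the clean exponential $e^{r(t-1)}$ rather than depending on the interplay between $r$ and $i$.
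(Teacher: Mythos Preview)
Your argument is correct and follows essentially the same route as the paper's proof: identify the coefficient as $1 + (-t)^i \Pr(Z \ge i)$, then bound $t^i \Pr(Z \ge i)$ by pushing $t^i$ inside the Poisson tail and completing to the full exponential series. You are in fact more explicit than the paper in isolating the role of the hypothesis $n \le m/2$ (namely $t \ge 1$) to justify the step $t^i \le t^k$ for $k \ge i$.
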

\begin{proof}
For any $i$, the coefficient of $\prv_i$ is 
\[
1 + (-t)^{i} \Pr(Z \geq i).
\]
It can be upper bounded as 
\[
1 + \sum^t_{i=0} \frac{e^{-r} (rt)^i}{i!} =
1 + e^{r(t-1)}.
\]
\end{proof}
The next lemma bounds the bias of the estimator.
\begin{lemma}
  \label{lem:unbias}
For all $n \leq m/2$, the bias of the estimator is bounded by 
\[
|\EE[\hat{S}_m(p)] - S_{m}(p)| \leq 2 + 2 e^{r(t-1)} + \min(m, S(p)) e^{-r}.
\]
\end{lemma}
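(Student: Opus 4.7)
The plan is to compute $\EE[\hat S_m(p)]$ symbol by symbol and compare it to $\sum_x (1 - (1-p(x))^m)$, exploiting Poisson identities for the smoothing variable $Z$. By linearity of expectation and $\EE[\prv_i] = \sum_x \binom{n}{i} p(x)^i (1-p(x))^{n-i}$, together with $\Pr(Z\geq i) = \EE_Z[\II(Z\geq i)]$ to swap the inner sums, the bias decomposes into a sum over $x$ of a per-symbol error term
\[
e(p) := 1 - (1-p)^n + \EE_Z\!\left[\sum_{i=1}^{\min(Z,n)} \binom{n}{i}(-tp)^i(1-p)^{n-i}\right] - \left[1 - (1-p)^m\right], \quad p = p(x).
\]
The goal is to show $\sum_x |e(p(x))| \le 2 + 2e^{r(t-1)} + \min(m, S(p))e^{-r}$.

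Next, I would use the Poisson approximation of the binomial, $\binom{n}{i}p^i(1-p)^{n-i} \approx \Pr(\mathrm{Poi}(np) = i)$, to identify the inner Poisson expectation (for large enough $\min(Z,n)$) with $(1-p)^n(e^{-tnp}-1) \approx (1-p)^m - (1-p)^n$. Hence $e(p)$ splits into two error sources: (a) a \emph{smoothing} error of size $O(e^{-r})$ per symbol, arising from the Poisson approximation together with the event $\{Z = 0\}$ which zeroes out the extrapolation; and (b) a \emph{truncation} error from capping the exponential series at $\min(Z, n)$, which I would bound using the Poisson moment generating function $\EE_Z[t^Z] = e^{r(t-1)}$ together with crude per-term bounds $|\binom{n}{i}(-tp)^i(1-p)^{n-i}| \le (tp)^i$. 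Summing (a) over $x$ contributes at most $S(p) \cdot e^{-r}$, and since $\hat S_m(p)$ is trivially at most $m$, this can be replaced by $\min(m, S(p))e^{-r}$. The aggregated contribution of (b) is $2e^{r(t-1)}$, with the additive $2$ absorbing $O(1)$ slack from the $i = 0$ and $i = n$ boundary corrections in the partial sums.

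The main obstacle will be controlling the truncation error (b) uniformly. A naïve per-symbol bound using $|\sum_{i=Z+1}^n \binom{n}{i}(-tp)^i(1-p)^{n-i}| \leq (1 + p(t-1))^n \leq e^{np(t-1)}$ is useless because it is exponential in $n$; the uniform $2e^{r(t-1)}$ bound must instead come from the Poisson smoothing, which weights each symbol's truncation remainder by $\Pr(Z < i)$ and exploits the alternating signs of $(-t)^i$ to produce cancellation across $i$. Equivalently, aggregating the bias \emph{before} bounding — essentially working with the generating function of $\sum_x e(p(x))$ rather than symbol by symbol — should expose the factor $\EE_Z[t^Z] = e^{r(t-1)}$ directly. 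Once this cancellation is made rigorous, the three terms combine into the claimed bound.
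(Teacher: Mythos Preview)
The paper's own proof is essentially a one-line citation: after rewriting
\[
\EE[\hat S_m(p)] - S_m(p) = \EE[U^{\text{SGT}}_t(X^n)] - \sum_x \bigl((1-p(x))^n - (1-p(x))^m\bigr),
\]
it directly invokes Lemma~8 and Corollary~2 of \cite{OrlitskySW16} to obtain the stated bound. It does not reprove the bias analysis. Your proposal therefore attempts something the paper does not do here: reconstructing the \cite{OrlitskySW16} bias bound from scratch. The per-symbol decomposition you set up is the right starting point, and is the same as the paper's rewriting step, but after that you diverge and there are two concrete gaps.

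First, the justification for the $\min(m,S(p))$ factor is wrong. You write that ``$\hat S_m(p)$ is trivially at most $m$'' and use this to replace $S(p)e^{-r}$ by $m e^{-r}$. But $\hat S_m(p)$ is \emph{not} bounded by $m$: by the preceding lemma each $\prv_i$ carries a coefficient of magnitude up to $1+e^{r(t-1)}$, so $\hat S_m$ can be negative or far exceed $m$. And even if it were bounded by $m$, that would bound the bias by $m$, not by $m e^{-r}$. In the \cite{OrlitskySW16} argument the $\min$ arises because the per-symbol bias is controlled by a quantity like $e^{-r}\bigl(1-(1-p(x))^m\bigr)$, whose sum over $x$ is $e^{-r}S_m(p)\le e^{-r}\min(m,S(p))$.

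Second, invoking the Poisson approximation $\binom{n}{i}p^i(1-p)^{n-i}\approx \Pr(\mathrm{Poi}(np)=i)$ is both lossy and misleading about the error structure. The exact unsmoothed inner sum is $(1-p(1+t))^n-(1-p)^n$, which is \emph{not} the target $(1-p)^m-(1-p)^n$; for $p>n/m$ the former explodes in magnitude. So the bias does not split cleanly into ``smoothing error'' (a) plus ``truncation error'' (b): there is a third binomial-versus-target discrepancy, and the Poisson smoothing in $Z$ is what controls it simultaneously with the truncation. You correctly flag (b) as the crux and point toward $\EE[t^Z]=e^{r(t-1)}$, but the proposal stops short of any argument that actually yields the $2+2e^{r(t-1)}$ term. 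A working reconstruction would bound $|e(p)|$ directly for the binomial model (not via a Poisson approximation), which is precisely the content of the cited lemmas in \cite{OrlitskySW16}.
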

\begin{proof}
As before let $t = (m-n)/n$. 
\begin{align*}
\EE[\hat{S}_m(p)] - S_{m}(p)
& =\sum^n_{i=1} \EE[\prv_i] + \EE[U^{\text{SGT}}_{t}(X^n)]  - \sum_{x \in \cX} \left( 1 - (1-p(x))^m\right) \\
& = \EE[U^{\text{SGT}}_{t}(X^n)]  - \sum_{x \in \cX} \left( (1 - p(x))^n - (1-p(x))^m\right).
\end{align*}
Hence by Lemma $8$ and Corollary $2$, in \cite{OrlitskySW16}, we get
\[
|\EE[\hat{S}_m(p)] - S_{m}(p)| \leq 2 + 2 e^{r(t-1)} + \min(m, S(p)) e^{-r}.
\]
\end{proof}
Using the above two lemmas we prove results for both the observed
support coverage and support estimator.

\subsection{Support coverage estimator}
Recall that the quantity of interest in support coverage estimation is
$S_{m}(p)/m$, which we wish to estimate to an accuracy of $\dst$.
\begin{proof}[Proof of Lemma~\ref{lem:prop-of-interest} for observed]
If we choose $r = \log \frac{3}{\dst}$, then by
Lemma~\ref{lem:unco}, the maximum coefficient of $\hat{S}_m(p)/m$ is
at most
\[
\frac{2}{m} e^{\frac{m}{n}\log \frac{3}{\dst}},
\]
which for $m \leq \alpha \frac{n \log (n/2^{1/\alpha})}{\log
  (3/\dst)}$ is at most $n^{\alpha}/m < n^{\alpha}/n$.
Similarly, by Lemma~\ref{lem:unbias},
\[
\frac{1}{m}|\EE[\hat{S}_m(p)] - S_{m}(p)| \leq \frac{1}{m} (2 + 2
e^{r(t-1)} + m e^{-r}) \leq \dst,
\]
for all $\dst > 6n^{\alpha}/n $.
\end{proof}

\subsection{Support estimator}
Recall that the quantity of interest in support estimation is
$S(p)/k$, which we wish to estimate to an accuracy of $\dst$.
\begin{proof}[Proof of Lemma~\ref{lem:prop-of-interest} for support]
Note that we are interested in distributions with all the non zero
probabilities are at least $1/k$. We propose to estimate $S(p)/k$ using
\[
\hat{S}_m(p)/k,
\]
for $m = k \log \frac{3}{\dst}$. Note that for this choice of $m$
\[
0 \leq S(p)-S_m(p) = \sum_{x} (1 -  ( 1 - (1-p(x))^m)) = \sum_{x} (1-p(x))^m \leq \sum_{x} e^{-mp(x)} \leq k e^{-\log\frac{3}{\dst}} = \frac{k\dst}{3}.
\]
If we choose  $r = \log \frac{3}{\dst}$, then by Lemma~\ref{lem:unco},
the maximum coefficient of $\hat{S}_m(p)/k$ is at most
\[
\frac{2}{k} e^{\frac{m}{n}\log \frac{3}{\dst}},
\]
which for $n \geq \alpha \frac{k}{\log (k/2^{1/\alpha})} \log^2 \frac{3}{\dst}$
is at most $k^{\alpha}/k < n^{\alpha}/n$.  Similarly, by
Lemma~\ref{lem:unbias},
\begin{align*}
  \frac{1}{k}|\EE[\hat{S}_m(p)] -  S(p)|
  & \leq \frac{1}{k}|\EE[\hat{S}_m(p)] -  S_{m}(p)| +  \frac{1}{k} |S(p)-S_m(p)| \\
  & \leq \frac{1}{k} (2 + 2 e^{r(t-1)} + k e^{-r}) + \frac{\dst}{3} \\
  & \leq \dst,
\end{align*}
for all $\dst > 12 n^{\alpha}/n$. 
\end{proof}

\section{Entropy and distance to uniformity}
The known optimal estimators for entropy and distance to uniformity
both depend on the best polynomial approximation of the corresponding
functions and the splitting trick~\cite{WuY14a, JiaoVHW15}. Building
on their techniques, we show that a slight modification of their
estimators satisfy conditions in Lemma~\ref{lem:prop-of-interest}.
Both these functions can be written as functionals of the form:
\[
f(\dP) = \sum_\smb g(\dPsmb),
\]
where $g(y) = -y\log y$ for entropy and $g(y)=\absv{y-\frac1\absz}$ for uniformity.

Both\cite{WuY14a, JiaoVHW15} first approximate $g(y)$ with
$P_{L,g}(y)$ polynomial of some degree $L$. Clearly a larger
degree implies a smaller bias/approximation error, but estimating a
higher degree polynomial also implies a larger statistical estimation
error. Therefore, the approach is the following:
\begin{itemize}
\item
For small values of $\dPsmb$, we estimate the polynomial $P_{L,g}(\dPsmb) = \sum^L_{i=1} b_i \cdot (p(x))^i$.
\item
For large values of $\dPsmb$ we simply use the empirical estimator for $g(\dPsmb)$.  
\end{itemize}

However, it is not a priori known which symbols have high probability
and which have low probability. Hence, they both assume that they
receive $2n$ samples from $\dP$. They then divide them into two set of
samples, $X_1^{'}, \ldots, X_n^{'}$, and $X_1, \ldots, X_n$. Let
$\Mltsmb^{'}$, and $\Mltsmb$ be the number of appearances of symbol
$\smb$ in the first and second half respectively. They propose to use
the estimator of the following form:
\[
\hat{g}(X_1^{2\nsmp})
=\max\left\{\min\left\{\sum_x{g_x},f_{\max}\right\}, 0\right\}.
\]
where $f_{\max}$ is the maximum value of the property $f$ and 
\[
g_x=
\begin{cases}
G_{L,g}(\Mltsmb), & \text{ for } \Mltsmb^{'}<c_2\logk, \text{ and }
\Mltsmb< c_1\logk,\\ 0, & \text{ for } \Mltsmb^{'}<c_2\logk, \text{
  and } \Mltsmb\ge c_1\logk,\\ g \left(\frac{\Mltsmb}{n}\right)+ g_n
, & \text{ for } \Mltsmb^{'}\ge c_2\logk,
\end{cases}
\]
where $g_n$ is the first order bias correction term for $g$,
$G_{L,g}(N_x) = \sum^L_{i=1} b_i \flnpwrss{N_x}{i}/\flnpwrss{n}{i}$ is
the unbiased estimator for $P_{L,g}$, and $c_1$ and $c_2$ are two
constants which we decide later.  We remark that unlike previous
works, we set $g_x$ to $0$ for some values of $N_x$ and $N'_x$ to
ensure that $c^*$ is bounded.  The following lemma bounds $c^*$ for
any such estimator $\hat{g}$.
\begin{lemma}
  \label{lem:general_est}
  For any estimator $\hat{g}$ defined as above, changing any one of
  the values changes the estimator by at most
  \[
  8\max\left(e^{L^2/n}\max
  |b_i|, \frac{L_g}{n}, g\left(\frac{c_1 \log (n)}n\right), g_n\right),
  \]
  where $L_g = n \max_{i\in \NN} |g(i/n)- g((i-1)/n)|$.
\end{lemma}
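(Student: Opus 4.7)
The plan is to bound the bounded-difference constant $c^*$ of $\hat g$ by tracking how the sum $\sum_x g_x$ changes under a single-coordinate modification of $X_1^{2n}$, and then observing that the outer clipping $\max\{\min\{\cdot,f_{\max}\},0\}$ is $1$-Lipschitz and so can only shrink the change. A single-coordinate modification alters the multiplicity pair $(\Mlt'_x,\Mlt_x)$ by $\pm 1$ in exactly one coordinate, and only for at most two symbols (the one removed and the one inserted); every other summand $g_x$ is untouched. It therefore suffices to bound each per-symbol change by $4$ times the maximum in the statement, so that summing over two symbols yields the factor $8$.

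For each affected symbol I would split into cases according to which branch of the piecewise definition of $g_x$ is active before and after the swap. The branch-preserving moves are handled as follows. If the symbol stays in the polynomial branch, the change equals $|G_{L,g}(\Mlt_x+1)-G_{L,g}(\Mlt_x)|$; using the finite-difference identity $\flnpwrss{(N+1)}{i}-\flnpwrss{N}{i}=i\flnpwrss{N}{i-1}$ this telescopes to $|\sum_i b_i\cdot i\cdot\flnpwrss{\Mlt_x}{i-1}/\flnpwrss{n}{i}|$, which by $\flnpwrss{\Mlt_x}{i-1}/\flnpwrss{(n-1)}{i-1}\le 1$ and $1/n\le e^{L^2/n}/n$ is at most $(L^2/n)\max_i|b_i|$, hence dominated by $e^{L^2/n}\max_i|b_i|$. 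If it stays in the empirical branch, the change is $|g((\Mlt_x+1)/n)-g(\Mlt_x/n)|\le L_g/n$ by definition of $L_g$. If it stays in the zero branch, the change is $0$.

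The more delicate moves are those that cross a branch boundary. When $\Mlt_x$ crosses $c_1\log n$ (with $\Mlt'_x<c_2\log n$) the change equals $|G_{L,g}(c_1\log n)|$, which I bound by the uniform estimate $|G_{L,g}(N)|\le e^{L^2/n}\max_i|b_i|$ valid for $N\le c_1\log n$. When $\Mlt'_x$ crosses $c_2\log n$ the change is at most $|G_{L,g}(\Mlt_x)|+|g(\Mlt_x/n)|+g_n$, whose three pieces correspond to the first, third, and fourth quantities in the maximum, respectively, since the polynomial-branch constraint $\Mlt_x<c_1\log n$ bounds the middle piece by $g(c_1\log n/n)$ in the relevant regime. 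Collecting all cases, each per-symbol change is at most $4$ times the stated maximum, which yields the claimed factor of $8$.

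The main obstacle is the uniform-in-$i$ ratio estimate $1/\flnpwrss{n}{i}\le e^{L^2/n}/n^i$ for all $i\le L$, which drives the $e^{L^2/n}$ factor and implicitly uses $L\lesssim\sqrt n$; this in turn follows from $\prod_{j<i}(1-j/n)^{-1}\le e^{L^2/n}$. Everything else is routine per-case bookkeeping, with the worst-case branch contribution determining which of the four quantities in the maximum is active, and the factor $8$ merely counts at most two affected symbols times at most four categories of change per symbol.
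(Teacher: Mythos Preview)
The paper states this lemma but does not supply a proof, so there is no reference argument to compare against; your case analysis is the natural route and most of it is sound. The Lipschitz reduction via the outer clipping, the ``at most two affected symbols'' count, the finite-difference identity $\flnpwrss{(N+1)}{i}-\flnpwrss{N}{i}=i\,\flnpwrss{N}{i-1}$, the estimate $1/\flnpwrss{n}{i}\le e^{L^2/n}/n^i$, and the resulting bounds on $|G_{L,g}(N+1)-G_{L,g}(N)|$ and $|G_{L,g}(N)|$ for $N<c_1\log n$ are all correct.

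There is, however, a genuine gap in your boundary-crossing analysis. When the modified sample lies in the \emph{first} half, only $\Mlt'_x$ moves while $\Mlt_x$ stays fixed. If $\Mlt'_x$ crosses $c_2\log n$ and simultaneously $\Mlt_x\ge c_1\log n$, the affected summand jumps between the zero branch and the empirical branch, so the per-symbol change is exactly $|g(\Mlt_x/n)+g_n|$ with $\Mlt_x$ allowed to be any integer in $[c_1\log n,\,n]$. Your write-up only treats the sibling case $\Mlt_x<c_1\log n$ (polynomial $\leftrightarrow$ empirical), invoking ``the polynomial-branch constraint $\Mlt_x<c_1\log n$'' to bound $|g(\Mlt_x/n)|$ by $g(c_1\log n/n)$; that constraint is absent here. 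For entropy, $|g(y)|=|y\log y|$ attains $1/e$ at $y=1/e$, which is not controlled by any of the four quantities in the maximum (all of which are $o(1)$ in the intended regime). So as written, this transition is not covered, and in fact the lemma as stated appears to need an additional hypothesis on $g$ (or a modified estimator/clipping) to rule it out; you should flag this rather than silently skip it.

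A smaller point: even in the case $\Mlt_x<c_1\log n$ you are implicitly using that $|g|$ is nondecreasing on $[0,c_1\log n/n]$ to pass from $|g(\Mlt_x/n)|$ to $g(c_1\log n/n)$. That holds for $-y\log y$ on $[0,1/e]$ and (with a symmetric tweak) for $|y-1/k|$, but it is an assumption on $g$ and should be stated.
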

\ignore{
There exist optimal estimators for both entropy and distance to
uniformity based on best polynomial approximation. We provide some key
ingredients of the proof and refer the reader to the details
in~\cite{WuY14a} for a treatment of best polynomial estimation. In
fact a very similar approach also gives sample-optimal estimation of
the distance to uniformity.
The objective in these problems are estimating functionals of the form:
\[
f(\dP) = \sum_\smb g(\dPsmb),
\]
where $g(x) = -x\log x$ for entropy, $g(x)=\absv{x-\frac1\absz}$ for uniformity, and $g(x)=\II\{x>0\}$.
The empirical estimator has a large bias, especially in the regime of insufficient samples. Moreover, Paninski~\cite{paninski2003} had shown that unbiased estimators exist only for polynomial functions $g(x)$, and showed that no unbiased estimator exists for the entropy function. This motivated the following approach~\cite{WuY14a, JiaoVHW15}. We first approximate $g(x)$ with a polynomial of some degree $L$. Clearly a larger degree implies a smaller bias/approximation error, but estimating a higher degree polynomial also implies a larger approximation error. Moreover larger values of $x$ also leads to a high variance. Therefore, the approach is the following:
\begin{itemize}
\item
For small values of $\dPsmb$, we estimate the polynomial $g(\dPsmb)$.
\item
For large values of $\dPsmb$ we simply use the first order bias
corrected empirical estimator for $g(\dPsmb)$.
\end{itemize}
One of the differences with the previous approaches is Poisson-sampling. We do not make the assumption of Poisson sampling, and instead work with a fixed set of samples. This helps us in invoking McDiarmid's inequality. We now describe the method in a greater detail (See~\cite[Section~4]{WuY14a}).
Assume that we receive $2n$ samples from $\dP$. We then divide them
into two set of samples, $X_1^{'}, \ldots, X_n^{'}$, and $X_1, \ldots,
X_n$. Moreover, let $\Mltsmb^{'}$, and $\Mltsmb$ be the number of
appearances of symbol $\smb$ in the first and second half
respectively.
\noindent We need one definition from approximation theory.
\begin{definition}
The \emph{best polynomial approximation} of degree $L$ for a function $g(x)$ over an interval $I$ is a degree $L$ polynomial $g_L(x)$ that minimizes 
\[
\sup_{x\in I}\absv{g_L(x)-g(x)}.
\]
Let $E_{g,L, I}$ be the value of this min-sup. 
\end{definition}
Let $g_{L,H}(x)$ be the best polynomial approximation of $x\log (1/x)$ in the interval $[0, 1]$, and for $-1<c<1$, let $g_{L, U}$ be the best polynomial approximation of $|x-c|$ over $[-1,1]$. Then, from~\cite[Lemma 2]{CaiL11},~\cite[Equation 7.2.2]{Timan63} we have
\begin{itemize}
\item 
The largest coefficients of 	$g_{L,H}$ and $g_{L,U}$ are at most $2^{3L}$. 
\item The approximation errors for these functions are bounded by:
\begin{align}
E_{-x\log x,L, [0,1]}\le O\left(\frac{1}{L^2}\right)\ \text{ and } \ E_{|x-c|,L, [-1,1]}\le O\left(\frac{\sqrt{1-c^2}}{L}\right)\label{eqn:est-guar}.
\end{align}
\end{itemize}
 and $g_{L,U} = \sum_{i=0}^L b_ix^i$
We will use three constants $c_0, c_1, and c_2$. These constants will have the same value for estimating the entropy and distance to uniformity and serve the following purpose. Symbols $\smb$ with $\Mltsmb^{'}$ more than $c_2\logk$ times will use empirical estimators for $g(\dPsmb)$. Elements for which $\Mltsmb^{'}<c_2\logk$, and $\Mltsmb<c_1\logk$, we will use a polynomial estimator. Finally, for symbols with $\Mltsmb^{'}<c_2\logk$, and $\Mltsmb>c_1\logk$ we output $0$.
}
\subsection{Entropy}
The following lemma is adapted from Proposition 4 in~\cite{WuY14a}
where we make the constants explicit.
\begin{lemma}
  Let $g_n = 1/(2n)$ and $\alpha > 0$.  Suppose $c_1=2c_2$, and $c_2>35$,
  Further suppose that
  $n^3\left(\frac{16c_1}{\alpha^2}+\frac1{c_2}\right)>\log k\cdot\log
  n$. There exists a polynomial approximation of $- y \log y$ {with degree $L = 0.25 \alpha$}, over
  $[0, c_1\frac{\logk}{n}]$ such that $\max_{i} |b_i| \leq
  n^{\alpha}/n$ and the bias of the entropy estimator is at most
  $\cO\left(\left(\frac{c_1}{\alpha^2}+\frac1{c_2} + {\frac{1}{n^{3.9}}}\right)\frac{\absz}{n\logk}\right)$.
\end{lemma}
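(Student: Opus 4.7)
The plan is to adapt Proposition~4 of \cite{WuY14a}, tracking constants explicitly so the parameters match the claimed bounds. The first step is to construct the polynomial approximation: by classical approximation-theoretic results (Timan Eq.~(7.2.2) and Lemma~2 of \cite{WuY14a}), there is a polynomial $Q_L(z) = \sum_{i=0}^L a_i z^i$ of degree $L$ approximating $-z \log z$ on $[0,1]$ with $\sup_{z \in [0,1]} |Q_L(z) + z \log z| = \cO(1/L^2)$ and $\max_i |a_i| \leq 2^{3L}$. Rescaling via $y = Tz$ with $T = c_1 \log k / n$ yields $P_{L,g}(y) = T\, Q_L(y/T) - y \log T = \sum_{i=0}^L b_i y^i$, which approximates $-y \log y$ on $[0, T]$ with error $\cO(T/L^2)$; the coefficients satisfy $|b_i| = |a_i|\, T^{1-i}$ for $i \geq 2$, with the linear and constant terms handled separately. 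Choosing $L$ proportional to $\log n$ with proportionality constant of order $\alpha$ (the stated $L = 0.25\alpha$ should be understood as $0.25\alpha \log n$), the worst-case bound $2^{3L} (n/\log k)^{L-1}$ can be made at most $n^{\alpha}/n$; the hypothesis $c_2 > 35$ is used here to absorb the universal constants arising from the $2^{3L}$ factor.

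The second step is to bound the bias of $\hat g = \sum_x g_x$ by splitting symbols according to the three cases in the definition of $g_x$. For symbols with $N_x' \geq c_2 \log n$, the first-order bias-corrected empirical estimator $g(N_x/n) + g_n$ has per-symbol bias $\cO(1/n^2)$ via a Taylor expansion of $-y \log y$ around $p_x$ with $g_n = 1/(2n)$ removing the leading variance term; summed over at most $k$ symbols the contribution is $\cO(k/n^2)$, dominated by the target bound. For symbols with $N_x' < c_2 \log n$ and $N_x < c_1 \log n$, the bias equals the polynomial approximation error on $[0,T]$, which is $\cO(T/L^2) = \cO\bigl(c_1 \log k /(n \alpha^2 \log^2 n)\bigr)$ per symbol; summed over $k$ symbols this yields $\cO\bigl(c_1 k /(n \alpha^2 \log n)\bigr)$, matching the first term in the claimed bound.

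The third step handles the two cross cases where $N_x$ and $N_x'$ fall on opposite sides of their thresholds. A Chernoff tail bound gives that when $p_x \leq c_2 \log n / n$ (which is the regime flagged by $N_x'$ being small in expectation), the probability that $N_x \geq c_1 \log n = 2 c_2 \log n$ decays as $n^{-\Omega(c_2)}$; the symmetric case, with $p_x$ large but $N_x'$ small, is bounded identically. Combined with $|g(p_x)| = \cO((\log n)^2/n)$, the total contribution is at most $k (\log n)^2 n^{-c_2}/n$, and the hypothesis $c_2 > 35$ makes this dominated by $k / (n^{3.9} \cdot n \log k)$, producing the $1/n^{3.9}$ term in the claim. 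The side condition $n^3(16 c_1 / \alpha^2 + 1/c_2) > \log k \cdot \log n$ is precisely what is needed to guarantee that these tail contributions are negligible relative to the main approximation-error bias.

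The main obstacle will be balancing the competing constraints on $L$: enlarging $L$ shrinks the approximation error $\cO(1/L^2)$ but simultaneously inflates the coefficient bound $2^{3L}\, T^{1-i}$. The proof therefore requires picking $L$ proportional to $\log n$ with a carefully tuned constant depending on $\alpha$ and $c_1$, and then verifying that the Chernoff tail bounds in the cross-case analysis are sharp enough for the chosen constants. The remainder is a matter of collecting the three contributions and confirming that their sum has the claimed form $\cO\bigl((c_1/\alpha^2 + 1/c_2 + 1/n^{3.9})\, k /(n \log k)\bigr)$.
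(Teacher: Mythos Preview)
Your proposal is essentially correct and follows the same route as the paper: cite the Wu--Yang analysis for the bulk of the bias, bound the extra bias introduced by the new middle case $N_x' < c_2\log n,\ N_x \ge c_1\log n$ via a Chernoff bound (this is where $c_2>35$ enters, yielding a tail of order $n^{-0.1\sqrt{2}\,c_2}\le n^{-4.9}$), and control the coefficients by the $2^{3L}$ bound with $L=0.25\,\alpha\log n$. One minor correction: in your first paragraph you attribute the hypothesis $c_2>35$ to ``absorbing constants from the $2^{3L}$ factor,'' but the coefficient bound depends only on $L$ and $\alpha$, not on $c_2$; the role of $c_2>35$ is solely in the Chernoff step you describe in your third paragraph.
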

\begin{proof}
Our estimator is similar to that of~\cite{WuY14a, JiaoHW16} except for
the case when $\Mltsmb^{'}<c_2\logk$, and $\Mltsmb>c_1\logk$. For any
$\dPsmb$, and $\Mltsmb^{'}$ and $\Mltsmb$ both distributed
$Bin(n\dPsmb)$. By the Chernoff bounds for binomial distributions, the
probability of this event can be bounded by,
\begin{align}
\max_{\dPsmb}\probof{\Mltsmb^{'}<c_2\logk, \Mltsmb>2c_2\logk}
\le \frac1{n^{0.1 \sqrt 2 c_2}}\le \frac1{n^{4.9}}.\nonumber
\end{align}
Therefore, the additional bias the modification introduces is at most
$k\log k/n^{4.9}$ which is smaller than the bias term
of~\cite{WuY14a, JiaoHW16}.

The largest coefficient can be bounded by using that the best
polynomial approximation of degree $L$ of $x\log x$ in the interval
$[0,1]$ has all coefficients at most $2^{3L}$. Therefore, the largest
change we have (after appropriately normalizing) is the largest value
of $b_i$ which is
\[
\frac{2^{3L}e^{L^2/n}}{n}.
\]
For $L = 0.25\alpha\log n$, this is at most $\frac {n^a}{n}$. 
\end{proof}

The proof of Lemma~\ref{lem:prop-of-interest} for entropy follows from
the above lemma and Lemma~\ref{lem:general_est} and by substituting $n
= \cO\left(\frac{k}{\log k} \frac{1}{\dst}\right)$.

\subsection{Distance to uniformity} 

We state the following result stated in~\cite{JiaoHW16}.
\begin{lemma}

  Let $c_1>2c_2$, $c_2=35$. There is an estimator for distance to uniformity that changes by at most $n^{\alpha}/\nsmp$ when a sample is changed, and the bias of the estimator is at most $O(\frac{1}{\alpha}\sqrt{\frac{c_1\logk}{k\cdot n}})$.
\end{lemma}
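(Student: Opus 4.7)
The plan is to mirror the entropy analysis but for the function $g(y) = |y - 1/k|$, producing an estimator of the template that Lemma~\ref{lem:general_est} governs and then verifying both the per-sample sensitivity bound and the claimed bias bound. Upon receiving $2n$ samples, split them into two halves yielding $N'_x$ and $N_x$, and set $g_x$ to: $G_{L,g}(N_x)$, the unbiased estimator built from the degree-$L$ best polynomial approximation $P_{L,g}(y) = \sum_{i=0}^{L} b_i y^i$ of $|y-1/k|$ on $[0, c_1 \log n/n]$, when $N'_x < c_2 \log n$ and $N_x < c_1 \log n$; the empirical value $|N_x/n - 1/k|$ (with $g_n = 0$, since $g$ is piecewise linear and needs no first-order bias correction) when $N'_x \ge c_2 \log n$; and $0$ in the remaining case. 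Choose $L = 0.25\alpha\log n$.

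For the sensitivity I apply Lemma~\ref{lem:general_est}. The Lipschitz constant of $g$ gives $L_g/n \le 1/n$, while $g(c_1\log n/n) \le c_1\log n/n$, both safely bounded by $n^\alpha/n$, and $g_n = 0$. The only nontrivial term is $e^{L^2/n}\max_i |b_i|$. After linearly rescaling $[0, c_1\log n/n]$ to $[-1,1]$, Timan's classical bound gives coefficients of the best polynomial approximation of $|x-c|$ on $[-1,1]$ of size at most $2^{O(L)}$; undoing the rescaling inflates the $i$-th coefficient by $(2n/(c_1\log n))^i$, but Lemma~\ref{lem:general_est} normalizes by $\flnpwrss{n}{i}$, which exactly absorbs this rescaling up to constants. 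Thus $\max_i|b_i| \le 2^{O(L)}/n$, and with $L = 0.25\alpha\log n$ the factor $e^{L^2/n}$ contributes only polylogarithmically, so the whole term is $\le n^\alpha/n$.

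For the bias I decompose $\bigl|\mathbb{E}[\hat g(X^{2n})] - \|p - u\|_1\bigr| \le \sum_x \bigl|\mathbb{E}[g_x] - |p(x) - 1/k|\bigr|$. For symbols with $p(x) \lesssim \log n/n$, a Chernoff bound on the first half shows we stay in the polynomial regime with overwhelming probability, incurring only the polynomial approximation error. The Bernstein/Timan estimate $E_{|y-1/k|, L, [0,c_1\log n/n]} \lesssim \sqrt{(c_1\log n/n)\cdot 1/k}/L$ per symbol, summed over at most $k$ symbols and using $L = \Theta(\alpha\log n)$, gives the claimed $O\bigl(\tfrac{1}{\alpha}\sqrt{c_1\log k/(kn)}\bigr)$ bias. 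For symbols with $p(x) \gtrsim \log n/n$, the empirical regime gives a contribution of the same or smaller order by standard Bernstein. The dropped case $\{N'_x < c_2\log n,\, N_x \ge c_1\log n\}$ with $c_1 = 2c_2$ and $c_2 \ge 35$ contributes at most $O(k/n^{\Omega(c_2)})$ by the Chernoff argument used in the entropy case, which is negligible.

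The main obstacle will be the polynomial approximation step. Unlike entropy, $|y - 1/k|$ has its sole non-smooth point $1/k$ lying inside the approximation interval, so one must carefully exploit the fact that $1/k$ is close to the left endpoint after rescaling to $[-1,1]$ (i.e. that the parameter $c$ in $\sqrt{1-c^2}/L$ is near $-1$) in order to get the per-symbol error $O(\sqrt{\log n/(nk)}/L)$ rather than the weaker $O(\sqrt{\log n/n}/L)$ that a generic $|x - c|$ approximation would give. Once this rescaling is carried out cleanly, combining it with Lemma~\ref{lem:general_est} and substituting $n = \Theta(k/(\log k \cdot \dst^2))$ yields the stated sensitivity and bias simultaneously.
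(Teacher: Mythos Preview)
Your construction and analysis are essentially the paper's Case~1 argument, and for that regime they are correct. The genuine gap is the regime $1/k \gtrsim c_2\log n/n$ (equivalently $k \lesssim n/\log n$), which you do not handle. Your sentence ``For symbols with $p(x)\gtrsim\log n/n$, the empirical regime gives a contribution of the same or smaller order by standard Bernstein'' is the failure point: unlike entropy, $g(y)=|y-1/k|$ has its nonsmooth point at $1/k$, and when $1/k$ exceeds your threshold $c_2\log n/n$, every symbol with $p(x)\approx 1/k$ is routed to the \emph{empirical} estimator. There the per-symbol bias is $\bigl|\mathbb{E}|N_x/n-1/k|-|p(x)-1/k|\bigr|=\Theta(\sqrt{p(x)/n})=\Theta(1/\sqrt{kn})$, not $O(1/n)$, and summing over up to $k$ such symbols yields total bias $\Theta(\sqrt{k/n})$. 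At the optimal sample size $n\asymp k/(\dst^2\log k)$ this is $\Theta(\dst\sqrt{\log k})$, a $\sqrt{\log k}$ factor larger than needed. Your ``main obstacle'' paragraph presupposes that $1/k$ sits near the left endpoint of $[0,c_1\log n/n]$, which is precisely what fails in this regime.

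The paper resolves this by introducing a second case. When $1/k>c_2\log n/n$ it abandons the threshold on $N'_x$ and instead thresholds on $|N'_x/n-1/k|$ against $\sqrt{c_2\log n/(kn)}$: the polynomial approximation is now taken on an interval \emph{centered at} $1/k$ of half-width $\sqrt{c_1\log n/(kn)}$, and the empirical estimator is used only for symbols whose $p(x)$ is demonstrably far from the kink. This is the missing idea you would need to add: a case split on the size of $1/k$ relative to $\log n/n$, with the second case relocating both the threshold and the approximation interval to a neighborhood of $1/k$ rather than of $0$.
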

\begin{proof}
Estimating the distance to uniformity has two regions based on $N'_x$ and $N_x$. 
\paragraph{Case 1: $\frac1\absz < c_2\logk/n$.} In this case, we use the estimator defined in the last section for $g(x) = |x-1/k|$. 

\paragraph{Case 2: $\frac1\absz > c_2\logk/n$.} In this case, we have a slight change to the conditions under which we use various estimators:
\[
g_x=
\begin{cases}
G_{L,g}(\Mltsmb), & \text{ for }
\absv{\Mltsmb^{'}-\frac1{\absz}}<\sqrt{\frac{c_2\logk}{kn}}, \text{
  and } \absv{\Mltsmb-\frac1{\absz}}<\sqrt{\frac{c_1\logk}{kn}},\\ 0,
& \text{ for }
\absv{\Mltsmb^{'}-\frac1{\absz}}<\sqrt{\frac{c_2\logk}{kn}}, \text{
  and } \absv{\Mltsmb-\frac1{\absz}}\ge\sqrt{\frac{c_1\logk}{kn}},\\ g
\left(\frac{\Mltsmb}{n}\right) , & \text{ for }
\absv{\Mltsmb^{'}-\frac1{\absz}}\ge\sqrt{\frac{c_2\logk}{kn}}.
\end{cases}
\]

The estimator proposed in~\cite{JiaoHW16} is slightly different, assigning $G_{L,g}(\Mltsmb)$ for the first two cases. We design the second case to bound the maximum deviation. The bias of their estimator was shown to be at most $\cO\left(\frac1{L}\sqrt{\frac{\logk}{k\cdot n\logk}}\right)$, which can be shown by using 
\cite[Equation 7.2.2]{Timan63}
\begin{align}
E_{|x-\tau|,L, [0,1]}\le O\left(\frac{\sqrt{\tau(1-\tau)}}{L}\right).\label{eqn:tau-unif}
\end{align}
By our choice of $c_1, c_2$, our modification changes the bias by at most $1/n^4<\dst^2$. 

To bound the largest deviation, we use the fact (\cite[Lemma 2]{CaiL11}) that the largest coefficient of the best degree-$L$ polynomial approximation of $|x|$ in $[-1,1]$ has all coefficients at most $2^{3L}$. Similar argument as with entropy yields that after appropriate normalization, the largest difference in estimation will be at most $n^{\alpha}/n$.
\end{proof}

The proof of Lemma~\ref{lem:prop-of-interest} for entropy follows from
the above lemma and Lemma~\ref{lem:general_est} and by substituting $n
= \cO\left(\frac{k}{\log k} \frac{1}{\dst^2}\right)$.

\section{Proof of approximate ML performance}
\label{app:approx-ml}
\noindent \textit{Proof.}
We consider symbols such that $\dPsmbz \ge \err/\beta$ and $\dPsmbz < \err/\beta$
separately. For an $\smbz$ with $\dPsmbz\ge\err/\beta$, by the definition of
$\propp{\mP Z}$, 
\begin{align*}
\tilde{p}_z(z) \geq \mP z(z) \beta \ge p(z) \beta \ge \err. 
\end{align*} 
Applying~\eqref{eqn:est-exists} to $\tilde{p}_z$, we have for $Z\sim\tilde{p}_z$, 
\begin{align*}
\err >
\probof{\absv{\propp {\tilde{p}_z} - \hpropz Z}>\dst}
\ge\ \tilde{p}_z(z) \cdot\II\left\{\absv{\propp {\tilde{p}_z} - \hpropz z}>\dst\right\}
\ge\ \err\cdot \II\left\{\absv{\propp {\tilde{p}_z} - \hpropz
    z}>\dst\right\}\nonumber, 
\end{align*}
where $\II$ is the indicator function, and therefore, $\II\left\{\absv{\propp {\tilde{p}_z} - \hpropz
    z}>\dst\right\}=0$. This implies that 
$\absv{\propp {\tilde{p}_z} - \hpropz
    z}<\dst.$
By an identical reasoning, since $\dPsmbz>\delta/\beta$, we have 
$\absv{\propp {\dP} - \hpropz   z}<\dst$.
By the triangle inequality, 
\begin{align*}
\absv{\propp {\dP} -\propp {\tilde{p}_z}}\le \absv{\propp {\dP} - \hpropz
    z}+ \absv{\propp {\tilde{p}_z} - \hpropz z} <2\dst.
\end{align*}
Thus if $p(z) \geq \err/\beta$, then PML satisfies the required guarantee
with zero probability of error, 
and any error occurs only when  $\dP(z)<\err/\beta$. We bound this
probability as follows. When $Z\sim\dP$, 
\[
\probof{\dPsmbZ\le\err/\beta} \le \sum_{\smbz\in\cZ:\dPsmbz<\err/\beta}\dPsmbz \le
\err\cdot\absv{\cZ}/\beta. \eqed
\]

\end{document}